  \providecommand\BibTeX{{%
    \normalfont B\kern-0.5em{\scshape i\kern-0.25em b}\kern-0.8em\TeX}}}
\newtheorem{remark}{Remark}
\def\e{\epsilon}
\def\d{\delta}
\DeclarePairedDelimiter\floor{\lfloor}{\rfloor}
\begin{document}

\title{{Optimal Online Algorithms for File-Bundle Caching and Generalization to Distributed Caching}}


\author{Tiancheng Qin}
\affiliation{%
 \institution{Department of Industrial and Systems Enginering\\ University of Illinois at Urbana-Champaign}
 \streetaddress{104 S. Mathews Ave.}
 \city{Urbana}
 \state{IL}
 \country{USA}}
\email{tq6@illinois.edu}

\author{S. Rasoul Etesami}\thanks{This work is supported by the National Science Foundation CAREER Award under Grant No. EPCN-1944403.}
\affiliation{%
	\institution{Department of Industrial and Systems Engineering, Coordinated Science Lab\\ University of Illinois at Urbana-Champaign}
	\streetaddress{104 S. Mathews Ave.}
	\city{Urbana}
	\state{IL}
	\country{USA}}
\email{etesami1@illinois.edu}

\renewcommand{\shortauthors}{Qin and Etesami}

\begin{abstract}
We consider a generalization of the standard cache problem called \emph{file-bundle} caching, where different queries (tasks), each containing $l\ge 1$ files, sequentially arrive. An online algorithm that does not know the sequence of queries ahead of time must adaptively decide on what files to keep in the cache to incur the minimum number of cache misses. Here a cache miss refers to the case where at least one file in a query is missing among the cache files. In the special case where $l=1$, this problem reduces to the standard cache problem. We first analyze the performance of the classic \emph{least recently used} (LRU) algorithm in this setting and show that LRU is a near-optimal online deterministic algorithm for file-bundle caching with regard to competitive ratio. We then extend our results to a generalized $(h,k)$-paging problem in this file-bundle setting, where the performance of the online algorithm with a cache size $k$ is compared to an optimal offline benchmark of a smaller cache size $h<k$. In this latter case, we provide a randomized $O(l \ln \frac{k}{k-h})$-competitive algorithm for our generalized $(h,k)$-paging problem, which can be viewed as an extension of the classic \emph{marking algorithm}. We complete this result by providing a matching lower bound for the competitive ratio, indicating that the performance of this modified marking algorithm is within a factor of two of any randomized online algorithm. Finally, we look at the distributed version of the file-bundle caching problem where there are $m\ge 1$ identical caches in the system. In this case we show that for $m=l+1$ caches, there is a deterministic distributed caching algorithm which is $(l^2+l)$-competitive and a randomized distributed caching algorithm which is $O(l\ln(2l+1))$-competitive when $l\ge 2$. We also provide a general framework to devise other efficient algorithms for the distributed file-bundle caching problem and evaluate the performance of our results through simulations. 
\end{abstract}

\begin{CCSXML}
	<ccs2012>
	<concept>
	<concept_id>10003752.10003809.10010047.10010049</concept_id>
	<concept_desc>Theory of computation~Caching and paging algorithms</concept_desc>
	<concept_significance>500</concept_significance>
	</concept>
	<concept>
	<concept_id>10002951.10003152.10003520</concept_id>
	<concept_desc>Information systems~Storage management</concept_desc>
	<concept_significance>300</concept_significance>
	</concept>
	<concept>
	<concept_id>10010147.10010919.10010172</concept_id>
	<concept_desc>Computing methodologies~Distributed algorithms</concept_desc>
	<concept_significance>300</concept_significance>
	</concept>
	</ccs2012>
\end{CCSXML}

\ccsdesc[500]{Theory of computation~Caching and paging algorithms}
\ccsdesc[300]{Information systems~Storage management}
\ccsdesc[300]{Computing methodologies~Distributed algorithms}

\keywords{File-bundle caching; generalized $(h,k)$-paging; distributed caching; LRU algorithm; marking algorithm; online algorithms; competitive ratio.}

\maketitle

\section{Introduction}
Caching has long been considered as an effective and essential method to improve the performance of storage systems \cite{satyanarayanan1990survey}. In this regard, a basic and well-studied model for the cache problem is the two-level memory model, where the storage system has a large but slow main memory and a small but fast cache. In this model, the main memory contains a large set of $N$ files while the smaller cache can store at most $k\ll N$ files. Whenever a client requests some file, if it is available in the cache, then it can be sent to him immediately, and it is called a cache \emph{hit}; otherwise, it is called a cache \emph{miss} and incurs a cost (e.g., delay or energy cost).

While in the literature of cache problem files are usually considered to be requested one by one, and a cache miss is defined as missing a single requested file, nowadays it frequently arises in scientific data management applications such as data-grids, when conducting analysis for data-intensive scientific experiments such as High Energy Physics(HEP) experiments, every task (an analysis process) will request a bundle of files, each containing different attributes of some object, and it can be serviced only if all its requested files are in the cache at the same time\cite{otoo2004optimal}\cite{shoshani2000coordinating}. In this setting, a cache miss is thus defined to be missing any of the files in the file-bundle requested by the task. Such an extension of the standard cache problem is often referred to as the \emph{file-bundle caching} problem \cite{otoo2004optimal,otoo2005impact}. In the distributed setting, we further assume that a query may be served by a distributed storage system instead of a centralized storage system. In fact, the offline version of the distributed file-bundle caching problem has been studied in \cite{etesami2016design}, where a fundamental trade-off between cache design cost and cache access cost to respond to every query has been established. 

In this paper, we propose a novel approach to the \emph{online} version of the file-bundle caching problem in which every query consists of $l\ge 1$ files, and a cache miss occurs if at least one of the files in the query is not in the cache. The goal is then to devise online algorithms to minimize the total cache misses over an arbitrary sequence of query arrivals; that is, we have to decide the cache's content without any information on the future queries. The file-bundle caching problem is indeed a generalization of the standard cache problem since the latter can be viewed as a special case of the former when $l=1$. Subsequently, we also look at the distributed version of the file-bundle caching problem when there are $m>1$ caches in the system and design online algorithms which are competitive with respect to an optimal offline benchmark, i.e., the one which knows the entire sequence of queries a priori.

\subsection{Related Work}
The cache problem, also known as the paging problem, is one of the earliest problems studied in the field of competitive analysis and has a rich literature \cite{irani1996competitive}. For deterministic algorithms, it is known that the least recently used (LRU) algorithm that at each cache miss inserts the new file into the cache by evicting the least recently used file is $k$-competitive, and that is the best one can hope for. Here, $k$ refers to the cache size of the online algorithm, which is the same as that in the optimal offline algorithm \cite{sleator1985amortized}. When randomization is allowed, \cite{fiat1991competitive} introduced the marking algorithm and showed that it is $2H_k$-competitive.\footnote{Here, $H_k:=\sum_{i=1}^{k}\frac{1}{i}\approx\ln k$ is the $k$-th harmonic number.} In particular, it was shown that no randomized algorithm could achieve a competitive ratio better than $H_k$. Subsequently, $H_k$-competitive algorithms which exactly match this lower bound were given in \cite{mcgeoch1991strongly} and \cite{achlioptas2000competitive}.

For the ($h,k$)-paging problem, where the online algorithm with cache size $k$ is compared to the weaker offline optimal algorithm with cache size $h\le k$, \cite{sleator1985amortized} showed that no deterministic algorithm can achieve a competitive ratio better than $\frac{k}{k-h+1}$, and that LRU is exactly $\frac{k}{k-h+1}$-competitive. When randomization is allowed, \cite{young1991line} showed that the marking algorithm is $2\log \frac{k}{k-h+1}$-competitive (omitting lower order terms from the competitive ratio), and is roughly within a factor of two of the optimal. However, all of the above results assume that files are always requested one by one and do not capture scenarios where queries come as batches and must be responded by a cache containing \emph{all} the batch files. It is worth noting that the cache problem can itself be viewed as a special case of a more general $k$-server problem, which has been extensively studied in the past literature \cite{koutsoupias2009k}. However, when queries consist $l\ge 1$ files and the cache misses are defined as in the file-bundle setting, then a simple reduction of the file-bundle caching problem to the $k$-server problem becomes nontrivial. In particular, the file-bundle caching problem provides a broader view and new insights for the classic cache problem, which is of its own interest.

Apart from the cache problem, there is a rich literature on distributed storage and distributed caching, where the goals are to reduce access latencies, lower network traffic, and alleviate loads on a server. In this regard, different models \cite{awerbuch1998distributed,baev2008approximation,borst2010distributed} have been proposed and various algorithmic techniques to achieve these objectives have been analyzed \cite{maddah2014fundamental,maddah2015decentralized,cao1997cost,podlipnig2003survey}. Nonetheless, as in the case of caching and ($h,k$)-paging problems, these works do not address the situations where queries consist of a set of files and must be responded by the cache upon their arrival. Therefore, assuming that every query consists of $l\ge 1$ files, in this paper, we develop online distributed caching algorithms that are particularly suitable for the scenarios where completion of a query critically depends on the completion of each of its (sub) files. 

Perhaps, one of the first studies under the file-bundle setting is \cite{otoo2004optimal} where a near-optimal offline algorithm together with an online algorithm that iteratively solves the offline problem upon the arrival of new queries are proposed. However, the online algorithm in \cite{otoo2004optimal} requires a reallocation of the entire cache whenever a new query arrives, and no performance analysis is conducted except for simulations, where the algorithm's byte miss ratio is measured and compared with other algorithms. Moreover, \cite{little2009object} considers a different version of the online file-bundle caching problem and conducts competitive ratio analysis. However, their focus is mainly on determining the trade-off between bypassing or updating the cache, thus making their setting fundamentally different from ours. In particular, the definition of the cost function in \cite{little2009object} is in terms of the cost of missing single files, which is different from our query-wise definition of cost functions.

\subsection{Contributions and Organization}
In Section \ref{sec:model}, we provide a formal definition of the problem. In Section \ref{sec:LRU}, we introduce the LRU algorithm and analyze its performance in the case of the file-bundle caching problem. More specifically, we show that no deterministic online algorithm for the file-bundle caching problem can achieve a competitive ratio better than $k-l+1$, where $k$ is the cache size and $l$ is the length of each query, and that LRU is $k$-competitive. In particular, we adopt a new notion of ``\emph{phase}" which will be of crucial importance for the rest of our analysis. In Section \ref{sec:h,k}, we introduce a generalized ($h,k$)-paging problem and a modified marking algorithm under the file-bundle setting and show that 
\begin{enumerate}
	\item the competitive ratio of the modified marking algorithm in the generalized $(h,k)$-paging problem is at most $2l (\ln \frac{k}{k-h}-\ln \ln \frac{k}{k-h}+\frac{1}{2})$, whenever $\frac{k}{k-h}\ge e$, and it is at most $2l$ otherwise.
	\item When $\frac{k}{k-h}\ge e$, for any randomized caching algorithm there exists a sequence of queries such that the expected number of cache misses by the algorithm on that instance is at least $l(\ln \frac{k+1}{k-h+l+1}-\ln\ln\frac{k+1}{k-h+l+1}-2)$ times more than that of the optimal offline algorithm. 
\end{enumerate}

In Section \ref{sec:distributed}, we get into the distributed file-bundle caching problem and provide two distributed caching algorithms for the specific case when there are $m=l+1$ caches in the system. The first one is a deterministic $(l^2+l)$-competitive algorithm while the other is a randomized $2l(\ln(2l+1)-\ln\ln l+\frac{1}{2})$-competitive algorithm given $l\ge 2$. Consequently, we provide a general framework to design distributed caching algorithms for the file-bundle caching problem by introducing a so-called \emph{virtual} cache and reducing the distributed caching problem to the generalized $(h,k)$-paging problem. In Section \ref{sec:simulation}, we evaluate the performance of our devised file-bundle algorithms using various simulations. The results here are compared with the \emph{farthest in future} (FF) algorithm, which serves as a proxy of the optimal offline algorithm, and the \emph{random eviction} algorithm. We conclude the paper with some final remarks and further directions of research in Section \ref{sec:conclusion}.

\section{Problem Formulation}\label{sec:model}

In this section, we give a formal definition of the file-bundle caching problem under both single cache and distributed (multiple) cache settings. 

\subsection{Single-cache file-bundle caching problem}
Consider a data storage system with a total set of $\mathcal{O}=\{1,2,\ldots,N\}$ pages (files), each of unit size. Moreover, let us assume that there is a single cache of size $k$ whose content at each discrete time instances $t=1,2,\ldots,T$ is denoted by $C_t\subset \mathcal{O}$. Note that as the cache's size is fixed, we have $|C_t|=k, \forall t$. At each time $t$, a new query $Q_t\subset\mathcal{O}$ requesting a subset of pages arrives in the system and must be immediately responded by the cache $C_t$. We allow replications of pages in the queries, and as a consequence, without loss of generality, we may assume the sizes of all the queries are the same and equals to $|Q_t|=l, \forall t$. \footnote{In the case where different queries have different lengths, one can always add dummy pages to the smaller queries to make all the queries have the same maximum size of $l$.} Upon the arrival of a new query $Q_t$ at time $t$, if all the pages in $Q_t$ exist in the current content of the cache, i.e., $Q_t\subseteq C_t$, then there will be no cache miss, and we incur no cost. Otherwise, if $Q_t\not\subseteq C_t$, then a cache miss occurs, and we incur a unit cost. Therefore, we define the cost at each time instance $t$ to be $f(C_t,Q_t):=\mathbbm{1}_{\{Q_t\not\subseteq C_t\}}$, which is an indication of a cache miss at time $t$ in the file-bundle caching problem. For any caching algorithm (including the optimal offline algorithm), if a cache miss happens, it must update the content of the cache from $C_{t}$ to $C_{t+1}$ by inserting the query $Q_t$ into the cache and evicting some of the current pages. Note that such an update must take place before the arrival of the new query $Q_{t+1}$ at time $t+1$. Apart from this, no change to the cache is allowed by the algorithm. In other words, the only decisions that the caching algorithm makes are which pages to evict every time a cache miss happens.  The difference between online and offline algorithms is that online algorithms must decide on the update without knowledge of upcoming queries $Q_{t+1},Q_{t+2},\ldots$, while offline algorithms always know the entire sequence of queries, including future queries.

For a given sequence of queries $\sigma=\{Q_t\}_{t=1}^{T}$ and a given caching algorithm $\mathrm{ALG}$, we let the cost of the algorithm on the requested sequence $\sigma$ be $\mathrm{ALG}(\sigma):=\sum_{t=1}^{T}f(C_t,Q_t)$. Denoting the cost of the optimal offline algorithm on the sequence of queries $\sigma$ by $\mathrm{OPT}(\sigma)$, we say $\mathrm{ALG}$ has a competitive ratio $r$ if for \emph{every} sequence $\sigma$, we have,
\begin{align}\nonumber
\mathrm{ALG}(\sigma) \le r \cdot \mathrm{OPT}(\sigma) + c,
\end{align}
where $c$ is a constant which does not depend on $\sigma$. Therefore, in the file-bundle caching problem, our goal is to find an online algorithm that has a small competitive ratio. 

\subsection{Distributed file-bundle caching problem}
Consider a data-grid with multiple geographically distributed computing clusters; each keeps a replication of the whole data set and also keeps its own cache. Once a task is submitted to the data-grid, it is matched to a suitable computing cluster for execution. If the file-bundle requested by the task is contained in the cache of one of the computing clusters, then it can be assigned to that computing cluster and executed immediately; otherwise, it must wait for some files to be fetched from the hard drive before execution. A typical example of such data-grids is GridPP (Grid for Particle Physics)\cite{britton2009gridpp}. Motivated by this observation, we propose the following distributed file-bundle caching problem.

In the distributed file-bundle caching problem, everything remains as before except that instead of having one single cache, we now have $m\ge 1$ caches of identical size $k$. In particular, our goal is to dynamically update the caches such that every query can be responded by at least one of them. More precisely, we consider a set of $m$ caches each of size $k$. The collection of all cache contents at time $t$ is given by $C_t = \{C_t^i\}_{i=1}^m$, where $C_t^i\subset\mathcal{O}$ and $|C_t^i|= k, \forall i$. Upon the arrival of a query $Q_t$ at time $t$, we say a cache hit happens if there exists at least one cache among the set of $m$ caches that can answer the entire query; otherwise, we call it a cache miss. Therefore, in this case, the cost at time $t$ is given by
\begin{align}\nonumber
f(C_t,Q_t):= \prod_{i=1}^m \mathbbm{1}_{\{Q_t\not\subseteq C_t^i\}},
\end{align}
and the overall cost of an online algorithm $\mathrm{ALG}$ over a sequence of queries $\sigma=\{Q_t\}_{t=1}^{\mathrm{T}}$ equals to $\mathrm{ALG}(\sigma)=\sum_{t=1}^{T}f(C_t,Q_t)$.  However, as before, we compare the online algorithm performance with respect to the same optimal offline benchmark as in the single-cache setting, i.e., the one with a single cache of size $k$. Again, in the distributed setting, we are interested in finding an online algorithm with a small competitive ratio.

\begin{remark}\label{def3}
	While the above definitions of the competitive ratios are given for deterministic algorithms, however, they can be extended naturally to randomized algorithms. In that case, we say that a randomized algorithm $\mathrm{ALG}$ which updates the cache contents randomly has a competitive ratio $r$ if for any sequence of queries $\mathbb{E}[\mathrm{ALG}(\sigma)] \le r \cdot \mathrm{OPT}(\sigma) + c$, where $c$ is a constant independent of $\sigma$, and the expectation is with respect to the internal randomness of the algorithm.
\end{remark}

\section{Competitive Ratio Analysis of the LRU Algorithm}\label{sec:LRU}
In this section, we consider the file-bundle caching problem with a single cache and show that the classic LRU algorithm still performs quite well among online deterministic algorithms. The LRU algorithm, which is well known since as early as in 1970\cite{mattson1970evaluation}, is a simple but powerful deterministic caching algorithm that works as follows: \emph{whenever a new page is requested, put it into the cache; if the cache is full so that the new page can not be fetched in, then evict the least recently used page from the cache}. It has been shown in \cite{sleator1985amortized} that: i) no deterministic online algorithm for the standard cache problem can achieve a competitive ratio better than $k$, where $k$ is the size of the cache, and ii) LRU has a competitive ratio of $k$. In the following, we show analogous results for the performance of the LRU algorithm in the case of the file-bundle caching problem.

\begin{theorem}\label{thm:deterministic-LRU}
	
	\begin{itemize}
		\item[i)] No deterministic online algorithm for the file-bundle caching problem can achieve a competitive ratio better than $k-l+1$, where $k$ is the size of the cache and $l$ is the length of the queries.
		\item[ii)] LRU has a competitive ratio of $k$.
\end{itemize}
\end{theorem}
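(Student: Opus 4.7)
The plan is to prove the two parts of Theorem~\ref{thm:deterministic-LRU} separately by lifting the classical paging arguments of Sleator and Tarjan to the bundle setting.

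For part (i), fix any deterministic online algorithm $A$ and construct an adversarial sequence over $N=k+1$ pages as follows. Designate a set $H$ of $l-1$ ``heavy'' pages, and call the remaining $k-l+2$ pages the ``rotating pool''. At each time $t$ the adversary forms $Q_t=H\cup\{p_t\}$, where $p_t$ is the unique page absent from $A$'s cache at time $t$. Since $H$ is requested at every step, once $A$ first misses it will hold $H$ in its cache thereafter: any eviction of a heavy page would be reversed by the very next query, contributing at most an additive $O(1)$ term to the miss count. After this stabilisation $p_t$ always lies in the rotating pool, and the behaviour of $A$ on the pool is an instance of the standard paging problem with $k-l+2$ pages and cache size $k-l+1$. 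The classical lower bound for this sub-instance then gives $A(\sigma)=T$ while $\mathrm{OPT}(\sigma)\le T/(k-l+1)+O(1)$, yielding a competitive ratio of at least $k-l+1$.

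For part (ii), I would generalise the classical phase decomposition. Partition the query sequence into phases so that each phase is a maximal consecutive block of queries whose union of pages has size at most $k$; equivalently, a new phase begins at the first query whose arrival raises to $k+1$ the number of distinct pages requested since the start of the current phase. I would then establish two statements. (a) LRU incurs at most $k$ misses in every phase: within a phase LRU never evicts a page that has already been requested in the phase, because any such page has strictly higher recency than any page not touched in the phase, so LRU's eviction rule directs victims exclusively to non-phase pages. Thus each in-phase miss loads at least one page that has not yet appeared in the phase, and at most $k$ distinct pages exist per phase. (b) OPT incurs at least one miss in the window consisting of phase $i$ together with the first query of phase $i+1$, because that window contains $k+1$ distinct pages which cannot all fit in a cache of size $k$. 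Aggregating (a) and (b) over all phases gives $\mathrm{LRU}(\sigma)\le k\cdot\mathrm{OPT}(\sigma)+O(1)$.

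The main obstacle I expect is the ``no in-phase re-eviction'' invariant for LRU in step (a). In single-file paging each miss inserts exactly one page and the bookkeeping is immediate, but with bundles a single miss may insert and evict up to $l$ pages at once, and the LRU recency ordering must be tracked across these simultaneous changes. I would proceed by induction on the time steps inside a phase, maintaining the invariant that at every moment during the phase every page already requested in the phase is strictly more recent than every cached page not yet touched in the phase; this ensures that at each subsequent in-phase miss the $|Q_t\setminus C_t|$ LRU victims come exclusively from the non-phase pages in the cache, preserving the invariant and capping the number of distinct seen-in-phase pages at $k$. For part (i), the symmetric delicate point is to argue rigorously that the transient behaviour before $A$'s cache stabilises to contain $H$ contributes at most an $O(1)$ term to the miss counts, which the additive constant in the competitive ratio definition absorbs.
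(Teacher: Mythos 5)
Your overall strategy coincides with the paper's: part (i) restricts attention to query sequences that all contain a fixed set of $l-1$ pages, reducing to classical $(k-l+1)$-paging, and part (ii) uses the same phase decomposition (a new phase starts at the first query containing the $(k+1)$-st distinct page), with LRU charged at most $k$ misses per phase via the ``first presence'' / no-in-phase-re-eviction invariant. Your inductive handling of the invariant in (a) is sound: since at most $k$ distinct pages are touched per phase, at every in-phase miss there are at least $|Q_t\setminus C_t|$ cached pages not yet touched in the phase, and these are strictly older in LRU order, so they absorb all evictions. Part (i) is also fine as a lower bound (the stabilisation discussion is unnecessary: the adversary forces a miss at every step regardless of whether $A$ retains $H$, and OPT's bound is independent of $A$); just note that when the unique absent page happens to lie in $H$ your query degenerates to size $l-1$, which you can patch by replication as the model permits.

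There is, however, a concrete counting gap in step (b). You take the window $W_i$ to be ``phase $i$ together with the first query of phase $i+1$'' and argue one OPT miss per window. But $W_i$ and $W_{i+1}$ overlap at the first query of phase $i+1$, so a single OPT miss occurring exactly at a phase boundary can certify the claim for two consecutive windows. Aggregating therefore only yields $\mathrm{OPT}(\sigma)\ge (K-1)/2$, hence a competitive ratio of $2k$, not $k$. (An adversary-friendly OPT that misses only at phase-initial queries realizes this slack in your accounting.) The fix is the one the paper uses: shrink the window to $Q_2^i,\dots,Q_{g^i}^i,Q_1^{i+1}$, i.e., drop the first query of phase $i$. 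After OPT serves $Q_1^i$ its cache necessarily contains all pages of $Q_1^i$ (the model forces the query to be loaded on a miss), and since $Q_1^i,\dots,Q_{g^i}^i,Q_1^{i+1}$ jointly contain at least $k+1$ distinct pages, a static cache of size $k$ over the truncated window is impossible, so at least one miss occurs there. These truncated windows are pairwise disjoint, giving $\mathrm{OPT}(\sigma)\ge K-1$ and hence $\mathrm{LRU}(\sigma)\le k\cdot\mathrm{OPT}(\sigma)+k$ as required.
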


\begin{proof}
	First, we note that by restricting the query sequence $\sigma$ always to contain the same $l-1$ pages, we can reduce the problem to the standard cache problem of a cache size $k-l+1$. The reason is that, for such a sequence of queries, $l-1$ locations of the cache are always occupied by the same pages. As a result, responding to a new query is equivalent to responding to a new single-page using the remaining $k-l+1$ locations in the cache. More precisely, defining $\mathcal{P} := \{\sigma| \{1,2,\cdots,l-1\}\subset Q_t,\forall t\}$, for any online deterministic algorithm $\mathrm{ALG}$, we have,
	\begin{align}\nonumber
	\sup_{\sigma}\frac{\mathrm{ALG}(\sigma)}{\mathrm{OPT}(\sigma)} \geq \sup_{\sigma\in\mathcal{P}}\frac{\mathrm{ALG}(\sigma)}{\mathrm{OPT}(\sigma)} \geq k-l+1,
	\end{align}
	where the second inequality is due to the fact that in the case of a standard cache problem with a cache size $k-l+1$, the best achievable competitive ratio for any deterministic online algorithm is $k-l+1$.
	
	Conversely, for an arbitrary sequence of requests $\sigma$, let us divide this sequence into several disjoint phases as follows:
	\begin{itemize}
		\item Phase $1$ begins at the first page of $\sigma$;
		\item Phase $i$ begins at the first query which contains the $(k+1)$-th distinct page since phase $i-1$ has begun (see, Example \ref{eq:phase}).
	\end{itemize}
	We show that the optimal offline algorithm OPT makes at least one cache miss each time a new phase begins. Denote the $j$-th query in phase $i$ by $Q_j^i$, and let $g^i$ be the total number of queries in phase $i$. For every $i$, consider queries $Q_2^i,Q_3^i,\cdots,Q_{g^i}^i$ and $Q_1^{i+1}$. Note that at the end of the query $Q_1^i$, the cache must contain all the pages requested by $Q_1^i$, and according to our definition of phases, there are all altogether at least $k+1$ distinct pages requested by queries $Q_1^i,Q_2^i,\cdots,Q_{g^i}^i,Q_1^{i+1}$. Thus, at least one cache miss will occur when the cache tries to answer queries $Q_2^i,Q_3^i,\cdots,Q_{g^i}^i, Q_1^{i+1}$.
	
	On the other hand, we show that LRU makes at most $k$ cache misses during each phase. According to the definition of phases, at most $k$ distinct pages are being requested in a phase, and we consider their first presence in the phase. Notice that according to the definition of LRU, if a query does not contain the first presence of any of these pages, then it does not incur a cache miss because, after its first presence, a page is kept in the cache until after the current phase ends. We want to show that there are at most $k$ queries in the phase that contain such first presences. In fact, every such query must contain at least one of these first presences, this in view of the fact that there are at most $k$ such first presences implies that there can be at most $k$ queries that contain such first presences. Thus, LRU makes at most $k$ cache misses during each phase. Finally, denoting the total number of phases by $K$, we have $\mbox{OPT}(\sigma)\ge K-1$ and $\mbox{LRU}(\sigma) \le kK$, which shows that LRU is $k$-competitive.
\end{proof}

\smallskip
\begin{example}\label{eq:phase}
	Consider a file-bundle caching problem with cache size $k=3$ and query length $l=2$. Moreover, consider a sequence of $T=6$ queries: 
	\begin{align}\nonumber
	\sigma = \{(4,1),(2,1),(2,1),(5,3),(4,3),(3,1),(2,3)\}.
	\end{align} 
	Dividing this sequence into phases we have, 
	\begin{align}\nonumber
	\underbrace{\Big((4\quad1)(2\quad1)(2\quad1)\Big)}_{\text{Phase} \,1} \underbrace{\Big((5\quad3)(4\quad3)\Big)}_{\text{Phase} \,2} \underbrace{\Big((3\quad1)(2\quad3)\Big)}_{\text{Phase} \,3}
	\end{align}
	Note that by definition, Phase 2 begins at the fourth query since it contains page $5$, which is the fourth distinct page since Phase 1 began. Similarly, Phase 3 begins at the sixth query since this query contains page $1$, which is the fourth distinct page after Phase 2 has begun. 
\end{example}

\section{Generalized $(h,k)$-Paging Problem}\label{sec:h,k}
In this section, we consider the generalized $(h,k)$-paging problem, i.e., $(h,k)$-paging problem in the file-bundle setting, where we compare the performance of our online algorithm with a cache sized $k$ with the optimal offline algorithm using a cache size $h\le k$. We first introduce the classic randomized marking algorithm. We then introduce a slightly modified version of the marking algorithm that better suit our problem, and then upper bound its competitive ratio under the generalized $(h,k)$-paging setting. We also give an almost matching lower bound for the competitive ratio of any randomized algorithm for the generalized $(h,k)$-paging problem.

\subsection{Marking Algorithm}
Marking algorithm was the first randomized caching algorithm introduced in \cite{fiat1991competitive}, where it was shown that for $h=k$, it achieves a competitive ratio of $2H_k$ against an oblivious adversary who generates the sequence of requested pages.

\begin{algorithm}\caption{Marking Algorithm \cite{fiat1991competitive}}\label{alg:standard-marking}
	Associate with each page in the cache one bit. If a cache page is recently used, the corresponding bit value is $1$ (marked); otherwise, the bit value is $0$ (unmarked). Initially, all cache pages are unmarked. Whenever a page is requested:
	\begin{itemize}
		\item If the page is in the cache, mark the page.
		\item Otherwise, if there is at least one unmarked page in the cache, evict an unmarked page uniformly at random, insert in the requested page, and mark it.
		\item Otherwise, unmark all the pages and start a new phase.
	\end{itemize}
\end{algorithm}

It was shown in \cite{young1991line} that for the standard $(h,k)$-paging problem, the above marking algorithm has a competitive ratio of no more than $2(\ln \frac{k}{k-h}-\ln \ln \frac{k}{k-h}+\frac{1}{2})$, when $\frac{k}{k-h}\ge e$, and $2$ otherwise. In the following, we introduce a slightly modified version of the marking algorithm and leverage this result to obtain a similar competitive ratio for the modified marking algorithm in the case of generalized $(h,k)$-paging problem. For this purpose, let us again consider the same phase partitioning of the request sequence $\sigma$:
\begin{itemize}
	\item Phase 1 begins at the first page of $\sigma$;
	\item Phase $i$ begins at the first query which contains the $(k+1)$-th distinct page since phase $i-1$ has begun.
\end{itemize}
We state the modified marking algorithm here.

\begin{algorithm}\caption{Query-wise Marking Algorithm}\label{alg:query-wise}
	Associate with each page in the cache one bit. If a cache page is recently used, the corresponding bit value is $1$ (marked); otherwise, the bit value is $0$ (unmarked). Initially, all cache pages are unmarked. Whenever a \emph{query} is requested
	\begin{itemize}
		\item If all the pages requested by the query are in the cache, mark those pages.
		\item Otherwise, if the requested query is not the first query of a phase (and thus the number of unmarked pages in the cache is greater than the number of requested pages not in the cache), then insert the newly requested pages into the cache by evicting enough unmarked pages uniformly at random, and mark the inserted pages.
		\item Otherwise, if the requested query is at the beginning of a phase, unmark all the pages and start the new phase.
	\end{itemize}
\end{algorithm}
Note that the main difference between the standard Marking Algorithm \ref{alg:standard-marking} and the Query-wise Marking Algorithm \ref{alg:query-wise} is that the latter unmarks all the pages at the end of each phase regardless of whether all the pages are marked or not.

\smallskip
\begin{definition}
	We call the \emph{new pages} in Phase $i$ as those pages that were not requested in Phase $i-1$, and the \emph{old pages} in Phase $i$ as those pages that were requested before in Phase $i-1$.
\end{definition}

\begin{lemma}\label{lem1}
	Let $k$ be the size of the cache, $m$ be the number of new pages in a phase, and $l$ be the length of the queries. Then the expected number of cache misses of the modified marking algorithm during the phase is at most $m+m\sum_{\ell=m+1}^{k}\frac{1}{\ell}.$
\end{lemma}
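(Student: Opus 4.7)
The plan is to upper bound the expected number of miss queries by a sum of per-page ``eviction-before-first-request'' probabilities, to match those probabilities with the ones that would arise under the standard single-page marking algorithm (Algorithm~\ref{alg:standard-marking}) applied to a carefully expanded sequence, and finally to invoke the classical per-phase miss-count bound.

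The first step is to charge misses to pages. For each distinct page $p$ requested in the phase, let $\tau(p)$ be the index of the first query of the phase that contains $p$. I would argue
\[
\#\{\text{miss queries in the phase}\}\;\le\;\sum_{p}\mathbbm{1}\{p\notin C_{\tau(p)}\},
\]
because any miss query $Q_t$ has some $p\in Q_t$ with $p\notin C_t$, and for such a $p$ one must have $\tau(p)=t$: otherwise $p$ would have been marked in an earlier query of the phase and, since Algorithm~\ref{alg:query-wise} never evicts a marked page, $p$ would still lie in $C_t$. Taking expectations and splitting the sum over new and old pages,
\[
\mathbb{E}[\#\text{miss queries}]\;\le\;m+\sum_{p\,\text{old}}\Pr\bigl[p\text{ is evicted before }\tau(p)\bigr],
\]
since every new page is absent from the cache at the start of the phase.

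The second step is a coupling. I would expand each query $Q_t$ into its individual page requests processed in the order ``hits first, then misses,'' and compare the resulting process with Algorithm~\ref{alg:standard-marking} on this sequence. The coupling rests on two elementary facts: (i) a batch eviction of $s$ pages drawn uniformly at random from the unmarked pool has the same joint distribution as $s$ successive uniform single evictions without replacement, so the cache contents at every query boundary have the same distribution under both algorithms; and (ii) processing hits before misses guarantees that pages belonging to $Q_t$ itself are never evicted during $Q_t$. Since only the processing of $p$ itself can ever reinsert $p$ into the cache, the event $\{p\notin C_{\tau(p)}\}$ under Algorithm~\ref{alg:query-wise} coincides with the event that $p$ is a miss under Algorithm~\ref{alg:standard-marking} on the expanded sequence.

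The final step is to quote the classical per-phase bound for Algorithm~\ref{alg:standard-marking}: each distinct page contributes at most one miss per phase and the expected total number of misses in a phase with $k$ distinct and $m$ new pages is at most $m(1+H_k-H_m)=m+m\sum_{\ell=m+1}^{k}1/\ell$, which is the refined estimate underlying the analyses of \cite{fiat1991competitive,young1991line}. Combining with the previous two steps yields the desired inequality. The main obstacle I anticipate is the coupling in Step~2: because a single query may trigger a simultaneous batched eviction over several first-requested pages, one has to verify carefully that the batched random choice realizes the same distribution as the sequential single-page procedure assumed by the classical bound, and that no page of the current query is ever inadvertently counted as an eviction victim. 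The hits-first expansion is precisely the device that resolves both issues and makes the classical per-phase estimate transferable verbatim.
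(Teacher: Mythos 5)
Your Step 1 (charging each miss query to the first presence of a page that is absent from the cache, and splitting the resulting sum into the $m$ new pages plus per-page eviction probabilities for the old pages) is correct, and it is essentially the paper's own decomposition into queries of type (I) and type (II). The gap is in Steps 2--3. The coupling with Algorithm~\ref{alg:standard-marking} on the expanded page sequence breaks at phase boundaries, because the two algorithms reset their marks at different moments: Algorithm~\ref{alg:query-wise} unmarks everything at the first \emph{query} containing the $(k+1)$-th distinct page, and every page of that query --- including pages already requested and marked earlier in the phase --- is treated as a request of the new phase; Algorithm~\ref{alg:standard-marking} on the expanded sequence unmarks only at the $(k+1)$-th distinct \emph{page}, and only when all $k$ slots are already marked (a query-wise phase may contain fewer than $k$ distinct pages, in which case the page-wise phase keeps running). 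Concretely, take $k=4$, $l=2$ and the queries $(1,2),(3,4),(5,1),(2,6)$. Under Algorithm~\ref{alg:query-wise} the second phase starts at $(5,1)$ with page $1$ marked, so when page $2$ is first requested it is one of $3$ candidates for $2$ unmarked slots and misses with probability $1/3$; under Algorithm~\ref{alg:standard-marking} on the expanded sequence the request for $1$ still belongs to the first page-wise phase, page $1$ is unmarked (and evictable) when $5$ arrives, and $2$ misses with probability $1/4$. So the events $\{p\notin C_{\tau(p)}\}$ do not coincide, the query-wise miss probabilities can be strictly larger, and the classical per-phase bound --- whose phase contents and whose value of $m$ are defined page-wise --- cannot be quoted verbatim; doing so can under-count. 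Your hits-first device does fix the within-query issues you flagged (batch versus sequential eviction, and protecting the current query's pages), but those were not where the difficulty lies.

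What survives of your plan is the invariant behind the classical bound rather than the bound itself: at the moment an old page $p$ is first requested in a query-wise phase, conditioned on the marked set, the $B$ unmarked slots hold a uniformly random $B$-subset of the $B+m'$ still-candidate old pages, where $m'\le m$ is the number of new pages inserted so far in that \emph{query-wise} phase; hence $\Pr[p\notin C_{\tau(p)}]\le \frac{m'}{B+m'}\le\frac{m}{B+m}$. Summing these bounds as $B$ decreases from at most $k-m$ down to $1$ recovers $m\sum_{\ell=m+1}^{k}\frac{1}{\ell}$. This is exactly the computation the paper performs directly, via the hypergeometric ratio $\binom{B_{o_j}}{s_{o_j}}/\binom{B_{o_j}+m}{s_{o_j}}$ and a telescoping sum over the queries of type (II); you need to carry out that calculation for the query-wise phase structure rather than import the single-page result.
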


\begin{proof}
	Consider an arbitrary but fixed phase and let $X$ be a random variable denoting the number of cache misses by the query-wise marking algorithm during that phase. Moreover, let $Q_j$ be any query in the phase for which a cache miss occurs. Then either (I) $Q_j$ contains the first presence of at least one of the new pages, or (II) $Q_j$ \emph{does not} contain the first presence of any new page, but it contains the first presence of some $s_j\ge 1$ of the \emph{old} pages. Let $Q_{n_1},Q_{n_2},\cdots,Q_{n_p}$ be the queries satisfying condition (I), and $Q_{o_1},Q_{o_2},\cdots,Q_{o_q}$ be the queries satisfying condition (II). In the latter case, we also denote the number of old pages which were \emph{first} requested in queries $Q_{o_1},\cdots,Q_{o_q}$ by $s_{o_1},\cdots,s_{o_q}$, respectively. Clearly we have $p \leq m$.
	
	Now without loss of generality let us assume that $n_1<\cdots<n_p<o_1<\cdots<o_q$.\footnote{In fact, it is easy to argue that this assumption maximizes the expected number of cache misses by the algorithm.} We can write,
	\begin{align}\nonumber
	\mathbb{E}[X] &= \sum_{i=1}^{p}\mathbb{P}\{Q_{n_i} \text{suffers a cache miss}\} \cr 
	&\qquad+ \sum_{j=1}^{q}\mathbb{P}\{Q_{o_j} \text{ suffers a cache miss}\}\cr
	&= p + \sum_{j=1}^{q}\mathbb{P}\{Q_{o_j} \text{ suffers a cache miss}\}.
	\end{align}
	Next, we proceed to compute $\mathbb{P}\{Q_{o_j} \text{ suffers a cache miss}\}$. Let $A_{o_j}$ and $B_{o_j}=k-A_{o_j}$ be the number of marked and unmarked pages in the cache when query $Q_{o_j}$ comes. As queries $Q_{n_1},Q_{n_2},\cdots,Q_{n_p}$ altogether should mark at least $p$ pages in the cache, we must have
	$$B_{o_1} \le k-p.$$
	Also we note that $B_{o_{j+1}}=B_{o_j}-s_{o_j},\forall j$, and by the definition of a phase $B_{o_q}\ge s_{o_q}$, so that $B_{o_{q+1}}:= B_{o_q}-s_{o_q}\ge 0$.
	
	When query $Q_{o_j}$ comes, $B_{o_j}$ locations of the cache are still unmarked. However, the set of candidate pages in those unmarked locations is among the $k$ old pages in the cache at the beginning of the phase and should not be among the $A_{o_j}$ marked pages. Since the $A_{o_j}$ marked pages contain exactly $m$ new pages and $A_{o_j}-m$ old pages, the number of these page candidates is $k-(A_{o_j}-m)=B_{o_j}+m$. As $Q_{o_j}$ contains $s_{o_j}$ old pages which were requested in the previous phase but not yet in the current phase, the probability that $Q_{o_j}$ hits the cache is equal to the probability that all of these $s_{o_j}$ newly requested pages are among the $B_{o_j}$ unmarked pages in the cache. As by Algorithm \ref{alg:query-wise} all of the $B_{o_j}+m$ page candidates are equally likely to be in the cache (i.e., to be among the $B_{o_j}$ unmarked pages in the cache). Thus the probability that $Q_{o_j}$ hits the cache is equal to, 
	\begin{align*}\nonumber
	\mathbb{P}\{Q_{o_j} \text{hits the cache}\}&=\frac{{B_{o_j}\choose s_{o_j}}}{{B_{o_j}+m\choose s_{o_j}}}\\
	&=\prod_{i=1}^{s_{o_j}}(1-\frac{m}{B_{o_j}+m-s_{o_j}+i})\cr 
	&\ge 1-\sum_{i=1}^{s_{o_j}} \frac{m}{B_{o_j}+m-s_{o_j}+i}\\
	&=1-m\!\sum^{B_{o_j}}_{\ell=1+B_{o_{j+1}}}\frac{1}{\ell+m},
	\end{align*}
	where the inequality holds by $\prod_{i=1}^{n}(1-x_i)\ge 1-\sum_{i=1}^{n}x_i$, for every $0\le x_i\le1$. Now we can write,
	\begin{align}\label{eq:harmonic-bound}
	\mathbb{E}[X] &= p + \sum_{j=1}^{q}\mathbb{P}\{Q_{o_j} \text{suffers a cache miss}\}\cr
	&= p+ \sum_{j=1}^{q}(1-\mathbb{P}\{Q_{o_j} \text{hits the cache}\})\cr 
	&\leq p+m\sum_{j=1}^{q}\sum_{\ell=1+B_{o_{j+1}}}^{B_{o_j}}\frac{1}{\ell+m} \cr
	&\le p+ m\cdot \sum_{\ell=1}^{B_{o_1}}\frac{1}{\ell+m}.
	\end{align}
	Finally, since $B_{o_1} \le k-p$ and $p \le m$, the right hand side of \eqref{eq:harmonic-bound} is maximized for $p=m, B_{o_1}= k-m$. Therefore, we have $\mathbb{E}[X]\leq m+m\sum_{\ell=m+1}^{k}\frac{1}{\ell}$.
\end{proof}

\smallskip
\begin{theorem}\label{thrm1}
	The competitive ratio of the modified marking algorithm for the generalized $(h,k)$-paging problem is at most
	\begin{align}\nonumber
	r=2l (\ln \frac{k}{k-h}-\ln \ln \frac{k}{k-h}+\frac{1}{2}),
	\end{align}
	whenever $\frac{k}{k-h}\ge e$, and $r=2l$, otherwise.
\end{theorem}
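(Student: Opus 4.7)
My plan is to combine the per-phase upper bound on $\mathbb{E}[\mathrm{ALG}]$ from Lemma~\ref{lem1} with a paired-phase lower bound on $\mathrm{OPT}(\sigma)$ that picks up a factor of $l$ coming from the file-bundle structure, and then reduce the resulting competitive ratio to the optimization that underlies Young's analysis of the standard $(h,k)$-paging marking algorithm.

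First, I would partition $\sigma$ into phases and let $m_i$ denote the number of new pages in phase $i$. Summing Lemma~\ref{lem1} across all phases immediately gives
\begin{align}\nonumber
\mathbb{E}[\mathrm{ALG}(\sigma)] \leq \sum_i m_i\bigl(1 + H_k - H_{m_i}\bigr),
\end{align}
which is exactly the per-phase expression appearing in the standard analysis of marking.

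Next, I would lower bound $\mathrm{OPT}(\sigma)$ by a paired-phase argument adapted to the file-bundle setting. Over any two consecutive phases $i-1$ and $i$, the union of requested pages has size $d_{i-1} + m_i$, where $d_{i-1}$ is the number of distinct pages in phase $i-1$. By the phase boundary rule, the first query of phase $i$ must bring in at least $k+1-d_{i-1}$ new pages while containing at most $l$ pages, so $d_{i-1} + m_i \geq k+1$ and $d_{i-1} \geq k+1-l$. Since OPT holds only $h<k$ pages, it must fetch at least $d_{i-1} + m_i - h$ distinct pages across the two phases, and because each file-bundle miss by OPT can load at most $l$ new pages into the cache, OPT incurs at least $(d_{i-1} + m_i - h)/l$ file-bundle misses over phases $i-1$ and $i$. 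Summing over all $i$ so each phase is counted twice, and absorbing the boundary $O(l)$ slack into the additive constant $c$ of the competitive ratio (via a short case analysis that handles small $m_i$ separately so that the slack $d_{i-1}\geq k+1-l$ does not weaken the logarithm), I obtain
\begin{align}\nonumber
\mathrm{OPT}(\sigma) \geq \frac{1}{2l}\sum_i (k - h + m_i) - O(1).
\end{align}

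Combining the two bounds, the competitive ratio is at most
\begin{align}\nonumber
\frac{2l\sum_i m_i(1 + H_k - H_{m_i})}{\sum_i (k - h + m_i)} \leq 2l\cdot\max_{1\leq m\leq k}\frac{m(1 + H_k - H_m)}{k - h + m},
\end{align}
where the last step uses the elementary fact that a ratio of sums is dominated by the largest term-wise ratio. Using $H_k - H_m \leq \ln(k/m)$, the stationarity condition $(k-h)\ln(k/m) = m$ identifies the maximizer as $m^\star = (k-h)\,W(k/(k-h))$ for the Lambert $W$ function, and evaluating at $m^\star$ reduces the objective to $W(k/(k-h))$ itself. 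When $k/(k-h)\geq e$, the standard estimate $W(x)\leq \ln x - \ln\ln x + \frac{1}{2}$ yields the first conclusion of the theorem; when $k/(k-h)<e$ a direct estimate gives a maximum of at most $1$, producing the $2l$ bound. The main obstacle will be executing this final optimization cleanly, in particular pinning down the $+\frac{1}{2}$ correction to the Lambert $W$ asymptotic and verifying that the small-$m$ boundary case is not the binding one; however, this is exactly the calculation carried out by Young \cite{young1991line} for standard marking, and it transfers to our setting with only cosmetic changes thanks to the factor of $l$ being cleanly separated into a leading multiplicative constant.
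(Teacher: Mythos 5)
Your proposal is correct and follows essentially the same route as the paper: sum the per-phase bound from Lemma~\ref{lem1}, pair consecutive phases to get $X_{i-1}^{\mathrm{OPT}}+X_i^{\mathrm{OPT}}\ge (k-h+m_i)/l$, bound the ratio of sums by the largest term-wise ratio, and maximize $T(m)=l\,m(1+\ln(k/m))/(k-h+m)$ (the paper's Appendix Lemma~\ref{lemm:max-T} arrives at exactly your equation $u+\ln u=\ln s$, i.e., $u=W(s)$, and the same estimate $W(s)\le \ln s-\ln\ln s+\tfrac{1}{2}$). Your explicit attention to the slack $d_{i-1}\ge k+1-l$ is a point the paper glosses over (it simply asserts $k+m_i$ distinct pages across the two phases), but this does not change the argument.
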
 
\begin{proof}
	Let $m_i$ denote the number of new pages in Phase $i$. Moreover, let $X_i$, and $X_i^{\mathrm{OPT}}$ be the number of cache misses that the marking algorithm and the optimal algorithm makes during Phase $i$, respectively. From Lemma \ref{lem1}, we have
	\begin{equation*}
		\mathbb{E}[X_i]\leq m_i+m_i\sum_{\ell=m_i+1}^{k}\frac{1}{\ell} \leq m_i+m_i\ln\frac{k}{m_i}.
		\end{equation*}
	Also, since in total there are $k +m_i$ distinct pages in Phases $i -1$ and $i$, we have $h+l(X_{i-1}^{\mathrm{OPT}}+X_i^{\mathrm{OPT}}) \ge k +m_i$, where the left-hand is an upper bound on the maximum number of distinct page requests during Phases $i-1$ and $i$. Therefore,
	\begin{equation*}
	X_{i-1}^{\mathrm{OPT}}+X_i^{\mathrm{OPT}} \ge \frac{k-h+m_i}{l}.
	\end{equation*}
	Now let $K$ be the total number of phases in a given sequence of queries $\sigma$. Then the cost of the modified marking algorithm and that of the optimal algorithm can be bounded by
	\begin{align}\nonumber
	&\mathbb{E}[\mathrm{MARK}(\sigma)] = \sum_{i = 1}^{K}E[X_i]\le\sum_{i = 1}^{K}(m_i+m_i\ln\frac{k}{m_i}),\cr 
	&\mathrm{OPT}(\sigma) = \sum_{i=1}^{K}X_i^{\mathrm{OPT}}\ge\frac{1}{2}\sum_{i=1}^{K} \frac{k-h+m_i}{l}-c.
	\end{align}
	As a result, the competitive ratio $r$ of the modified marking algorithm is at most $r \le 2\max_m T(m)$, where $T(m) := \frac{m+m\ln\frac{k}{m}}{\frac{k-h+m}{l}}$. Finally, it is shown in Appendix I (Lemma \ref{lemm:max-T}), that the maximum value of $T(m)$ over all ranges of $m$ is at most $l\cdot (\ln \frac{k}{k-h}-\ln \ln \frac{k}{k-h}+\frac{1}{2})$ if $\frac{k}{k-h}\ge e$, and it is $l$, otherwise. This completes the proof.   
\end{proof}

\subsection{A Lower Bound for the Competitive Ratio}
In this section, we give a lower bound on the competitive ratio of any randomized algorithm for the generalized $(h,k)$-paging problem.

\smallskip
\begin{theorem}\label{thrm2}
	Let $\frac{k}{k-h}\ge e$. Then for any randomized algorithm ALG, there is a sequence of queries $\sigma$ such that, 
	\begin{align}\nonumber
	\frac{\mathbb{E}[\mathrm{ALG}(\sigma)]}{\mathrm{OPT}(\sigma)}\ge l\ln \frac{k\!+\!1}{k\!-\!h\!+\!l\!+\!1}\!-l\ln\ln\frac{k\!+\!1}{k\!-\!h\!+\!l\!+\!1}\!-\!2l\!+\!1.
	\end{align}
\end{theorem}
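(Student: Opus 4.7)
The plan is to invoke Yao's minimax principle: it suffices to exhibit a probability distribution over query sequences such that every \emph{deterministic} online algorithm with cache size $k$ suffers an expected cost ratio of at least the claimed value against the offline optimum with cache size $h$. This is the natural file-bundle analogue of Young's classical $(h,k)$-paging lower bound, and the extra factor $l$ should come from the fact that each query presents $l$ simultaneous ``attacks'' on the cache.

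First, I would design the random instance. Fix a pool of $n=k+1$ distinct pages and have the adversary generate queries in phases matching the phase notion of Section~\ref{sec:LRU}: within each phase, the queries collectively request all $k+1$ distinct pages before the next phase begins. A natural construction is to draw each query uniformly at random from the $l$-subsets of the pool (conditioned on the phase structure), or equivalently to draw a uniformly random permutation $\pi$ of the $k+1$ pages and reveal its entries one at a time inside $l$-tuple queries whose remaining $l-1$ slots are ``padded'' with pages already present in the online algorithm's cache so that the query length is always $l$. The padding choice is what will let the factor $l$ in the numerator appear.

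Next, I would lower-bound the expected online cost. At any moment the deterministic algorithm's cache holds $k$ of the $k+1$ pool pages, so exactly one page is missing; if the $l$ slots of the next query are sufficiently ``random'' with respect to the current cache, the probability that this missing page appears in the query is $l/(k+1)$. Tracking the phase and indexing by the $j$-th distinct page seen so far, a coupon-collector-style computation analogous to the one in the proof of Theorem~\ref{thrm1} produces a partial harmonic sum of the form $\sum_{j=k-h+l+2}^{k+1} l/j$, which is at least $l\bigl(\ln\tfrac{k+1}{k-h+l+1}-1\bigr)$ by standard manipulation. For the offline benchmark, an argument mirroring the OPT accounting in the proof of Theorem~\ref{thrm1} shows that OPT with cache size $h$ pays at most roughly $(k-h+l+1)/l$ misses averaged over two consecutive phases, because each of its misses can refresh up to $l$ cache slots and only $k-h+l+1$ ``excess'' pages must be covered across a window. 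Dividing the two bounds yields a per-phase competitive ratio of at least $l\ln\tfrac{k+1}{k-h+l+1}-l$; the additive $-l\ln\ln\tfrac{k+1}{k-h+l+1}$ correction then arises from the usual trick of cutting off the harmonic sum before the tail divergence (exactly the manipulation used in Lemma~\ref{lemm:max-T}), and the remaining $-2l+1$ absorbs phase-boundary slack.

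The main obstacle will be verifying that the $l$-tuple structure genuinely delivers a factor of $l$ in the per-query miss probability rather than merely a factor of $1$. This amounts to showing that the $l$ positions inside each query are sufficiently uncorrelated, with respect to the online algorithm's current cache state, that each behaves like an independent attack, so that the miss probability really is $l/j$ and not a smaller quantity diluted by overlap. This depends delicately on how the padding pages are chosen (ideally, padding with pages that the online algorithm already has in cache, to prevent padding slots from themselves triggering misses and inflating OPT's cost too) and on ruling out any way the online algorithm might carry useful cache state across phase boundaries to suppress the bound. Getting the constants $-l\ln\ln$ and $-2l+1$ exactly as stated is otherwise a bookkeeping exercise once the per-phase bounds are in place.
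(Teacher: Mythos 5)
There is a genuine gap, and it is the one you flag yourself at the end: the factor $l$ does \emph{not} come from an $l$-fold amplification of the online algorithm's per-query miss probability, and the argument you sketch for it would fail. In the paper's construction each adversarial query in the probing stage consists of $l-1$ \emph{fixed} pages (which OPT keeps permanently cached) plus exactly \emph{one} candidate page chosen to be least likely in ALG's cache; the resulting miss probability is the plain single-term $\approx \frac{m}{k+m-i+1}$, and summing gives an ordinary harmonic sum $m\ln\frac{k+m+1}{k+m-h+l+1}$ with no factor $l$. The factor $l$ enters entirely through the \emph{denominator}: because a single cache miss lets OPT fetch all $l$ pages of a query at once, OPT needs only $\frac{k-h+m}{l}+1$ misses per segment, and dividing ALG's harmonic sum by this produces $1+l\cdot\frac{m}{k-h+m+l}\ln(\cdots)$. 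Your proposed mechanism (all $l$ slots acting as independent attacks, giving miss probability $l/j$) cannot be made to work: if all $l$ slots are genuinely random probes then OPT cannot keep them cached either and its cost inflates by the same factor, while if you pad $l-1$ slots with safe pages the miss probability collapses back to $1/j$. Claiming a factor $l$ in the numerator on top of the $1/l$ savings in the denominator would yield a factor $l^2$, contradicting the theorem itself.

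Two further problems with the setup. First, padding queries ``with pages already present in the online algorithm's cache'' makes your input distribution depend on the algorithm, so Yao's minimax principle does not apply as stated; the paper instead uses the standard oblivious-adversary device of computing the marginal probabilities $P_i^t$ that page $i$ is in ALG's cache and deterministically selecting, by pigeonhole, a candidate page with $P_{p_i}^t\le \frac{k-(i-1)(1-\delta)}{k+m-(i-1)}$, which yields a single deterministic sequence per algorithm as the theorem requires. Second, you fix the page pool at $k+1$, whereas the paper works with $k+m$ candidate pages and optimizes over $m$ at the end (Lemma \ref{eq:lower-bound}); it is this optimization over $m$, not a truncation of the harmonic tail, that produces the $-l\ln\ln\frac{k+1}{k-h+l+1}$ correction in the final bound.
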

\medskip

\begin{proof}
	As before, let us denote the content of cache for ALG at time $t$ by $C_t\subseteq [N]$, where $|C_t|= k$. Moreover, let $D = \{D_{\e}\subseteq[N]:|D_{\e}|=k\}$ be the set of all possible states for $C_t$. Then any randomized algorithm can be described using a probability transition matrix from $D$ to itself given the query sequence $\sigma$. Consider an adversary that at each time $t$ knows the probability distribution induced by the ALG over the set of cache states. In other words, at every time $t$ the adversary knows $P_{\e}^t:=\mathbb{P}\{C_t=D_{\e}\}, \forall D_{\e}\subseteq D$, where $\sum_{D_{\e}\subseteq D}P_{\e}^t=1$. Now let $P_i^t = \sum_{i\in D_{\e}}P_{\e}^t$ be the probability that page $i$ is in $C_t$. We have,
	\begin{align*}
	\sum_{i=1}^{N}P_i^t &= \sum_{i=1}^{N}\sum_{i\in D_{\e}}P_{\e}^t=\!\!\sum_{D_{\e}\subseteq D}\!\sum_{i=1}^{N}\mathbbm{1}_{\{i\in D_{\e}\}}P_{\e}^t\!=\!\!\sum_{D_{\e}\subseteq D}\!\!kP_{\e}^t=k.
	\end{align*}
	
	Next, we generate a sequence of queries in a segment as follows: Let $m$ be a positive integer to be determined later, and $\d \rightarrow0^+$. During each segment,
	\begin{enumerate}
		\item[(1)] Generate $\frac{k-h+m}{l}$ queries requesting $k-h+m$ `new' pages, i.e. pages not in the the cache of ALG and OPT.\footnote{Note that this definition of `new' pages is different from that given in the proof of Lemma \ref{lem1}.} (Since $N$ is large we can always find such `new' pages.)
		\item[(2)] Denote the $k-h+m$ `new' pages together with the $h$ pages in the cache of OPT before the beginning of the segment to be $k+m$ \emph{candidate} pages. Generate another query requesting $l$ `new' pages, pick $l-1$ of them and denote them as \emph{fixed} pages.
		\item[(3)] For $i=1,\ldots,h-l$, we generate a query that consists of $l-1$ fixed pages and one page $p_i$ from the candidate pages that are least likely to be contained in the cache of ALG. That is, given time $t$ and for $i=1,\ldots,h-l$, while $\exists j\le i-1,$ such that $P_{p_j}^t\le 1-\d $, generate a query requesting $p_j$ and $l-1$ fixed pages. Since $\forall j\le i-1,P_{p_j}^t\ge 1-\d$, we have
		\begin{align}\nonumber
		&\sum_{\{\text{candidate pages}\ s\}}P_s^t\le \sum_{i=1}^{N}P_i^t=k,\cr
		&\qquad\sum_{j=1}^{i-1} P_{p_j}^t \ge (i-1)(1-\d).
		\end{align}
		As the total number of candidate pages is $k+m$, by pigeonhole principle there exists a candidate page $p_i$ such that $P_{p_i}^t\le \frac{k-(i-1)(1-\d )}{k+m-(i-1)}$. We generate a query requesting $p_i$ and $l-1$ fixed pages.
	\end{enumerate}
	Now we compute the number of cache misses during each segment for ALG and OPT. Since the total number of candidate pages requested during stage (3) is $h-l$, the optimal algorithm can keep these $h-l$ candidate pages in the cache all through stages (2) and (3) and only use the remaining $l$ pages in the cache to respond to queries generated in (1) and (2). Therefore the number of cache misses for the OPT in each of the above three states are:
	\begin{itemize}
		\item In (1), $\frac{k-h+m}{l}$ cache misses.
		\item In (2), $1$ cache miss.
		\item In (3), $0$ cache miss.
	\end{itemize}
	On the other hand, the number of cache misses for the ALG are given by
	\begin{itemize}
		\item In (1), $\frac{k-h+m}{l}$ cache misses.
		\item In (2), $1$ cache miss.
		\item In (3), the number of cache misses is at least 
		\begin{align*}
		&\sum_{i=1}^{h-l}(1-\frac{k-(i-1)(1-\d )}{k+m-(i-1)})\\
		&\qquad=\sum_{i=1}^{h-l}\frac{m-\d(i-1)}{k+m-i+1}\\
		&\qquad=m\cdot\sum_{i=1}^{h-l}\frac{1}{k+m-i+1}\qquad \text{(because $\d \rightarrow 0^+$)}\\
		&\qquad\ge m\cdot\ln\frac{k+m+1}{k+m-h+l+1}
		\end{align*}
	\end{itemize}
	As a result, the competitive ration of the ALG over this generated sequence of queries is at least
	\begin{align*}
	r\ge& \frac{\frac{k-h+m}{l}+1+m\cdot\ln\frac{k+m+1}{k+m-h+l+1}}{\frac{k-h+m}{l}+1}\\
	=&1+l\cdot \frac{m}{k-h+m+l}\ln\frac{k+m+1}{k-h+m+l+1}\\
	\ge &1+l\cdot\frac{m}{k-h+l+1+m}\ln\frac{k+1}{k-h+l+1+m}\\
	=&1+l\cdot f(m),
	\end{align*}
	Finally, maximizing the right-hand side of the above inequality with respect to $m$ (see, Lemma \ref{eq:lower-bound} in Appendix I) we obtain the desired lower bound on the competitive ratio.
\end{proof}

\begin{remark}
	Comparing the results of Theorems \ref{thrm1} and \ref{thrm2}, one can see that when the size of queries is much smaller than the size of the cache, i.e., $l\ll h, k$, (which is indeed the case in almost all practical applications), we have,
	\begin{align*}
	&2l\cdot (\ln \frac{k}{k-h}-\ln \ln \frac{k}{k-h}+\frac{1}{2})\\
	&\approx 2\cdot(l\ln\frac{k+1}{k-h+l+1}-l\ln\ln\frac{k+1}{k-h+l+1}-2l+1).
	\end{align*}
	As a result, the competitive ratio of the marking algorithm is approximately within a factor of two of the optimal one.
\end{remark}

\section{Distributed File-Bundle Caching}\label{sec:distributed}
Now we turn our attention to the distributed version of the file-bundle caching problem. This section considers the organization of the caches of geographically distributed data centers, each keeping a replication of the whole data set, in order to answer file-bundle queries. In this section, we propose two effective distributed caching algorithms that use $m=l+1$ caches and achieve small competitive ratios. Building upon this, we then propose a more general framework for designing distributed caching algorithms.

\subsection{Competitive Distributed Algorithms with $l+1$ Caches}
Here we propose a deterministic and a randomized distributed caching algorithm that can be viewed as a generalization of the LRU and the modified marking algorithms, respectively. The main results of this section are summarized in the following theorems:

\begin{theorem} \label{thrm3}
	For $m = l+1$ identical caches in the system, there is a deterministic distributed caching algorithm with a competitive ratio of $r=l^2 +l$. 
\end{theorem}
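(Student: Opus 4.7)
The plan is to construct a deterministic distributed algorithm that generalizes the file-bundle LRU rule to $m=l+1$ caches, and to analyze it using the same phase decomposition as in the proof of Theorem~\ref{thm:deterministic-LRU}. The natural algorithm I would propose maintains a joint LRU timestamp over all pages in any cache and, on each cache miss, selects one of the $l+1$ physical caches---e.g.\ via a round-robin rotation across the caches, or by picking the cache whose globally oldest LRU entry is the oldest---and updates it by evicting its $l$ least recently used pages and inserting the $l$ files of the missing query. The extra $l$ caches provide a redundancy buffer: while one cache is perturbed to absorb the latest query, the remaining $l$ caches retain their prior contents and can still serve later queries of the same phase.

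For the analysis I would import the phase decomposition of Theorem~\ref{thm:deterministic-LRU} verbatim: phase~$1$ begins at the first query, and phase~$i$ begins at the first query containing the $(k+1)$-th distinct page since phase~$i-1$ began. Because OPT is still restricted to a single cache of size $k$, the pigeonhole argument used there applies unchanged and yields $\mathrm{OPT}(\sigma)\ge K-1$, where $K$ is the number of phases. So, to obtain a competitive ratio of $l^2+l$, it suffices to show that $\mathrm{ALG}(\sigma)\le (l^2+l)\,K$, which combined with the OPT lower bound would give $\mathrm{ALG}(\sigma)\le (l^2+l)\,\mathrm{OPT}(\sigma)+c$ for an absolute constant $c$.

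The main technical step, and the main obstacle, is therefore proving that the algorithm incurs at most $l(l+1)$ misses within any single phase. My plan is to charge each miss to the cache that the algorithm updates on that miss, and then to show that each of the $l+1$ caches can be charged at most $l$ misses per phase, so that the total is at most $l(l+1)=l^2+l$. Intuitively, every miss charged to a given cache inserts new phase-pages into that cache and refreshes them at the top of its LRU order; together with the $l$ untouched caches, this should mean that after roughly $l$ charged misses on a particular cache, at least one cache among the $l+1$ always contains the files of any remaining query whose pages have all already appeared in the current phase. Making this per-cache bound precise is delicate: a naive count shows that $l$ misses bring only $l^2$ pages into a given cache, which is much smaller than the up-to-$k$ distinct phase pages, so the argument must exploit that the $l+1$ caches jointly cover the phase pages and that the update-selection rule spreads misses evenly enough that the adversary cannot repeatedly target the same cache. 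I expect to formalize this via a potential function or an invariant that tracks, for each cache, which phase-pages have been inserted and their LRU ranks, combined with a pigeonhole argument across the $l+1$ caches showing that once each cache has been refreshed $l$ times the remaining phase is served without further misses.
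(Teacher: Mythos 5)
There is a genuine gap here, and it is concentrated exactly where you flag it: the claim that your algorithm incurs at most $l(l+1)$ misses per phase, with phases defined by $k+1$ distinct pages, is not merely hard to prove --- it is false. A phase can contain up to $k$ distinct pages, and if the adversary fills a phase with $k/l$ queries over $k$ pages that are absent from all of your caches at the phase's start, every such query is a forced miss, giving on the order of $k/l$ misses in a single phase; for $k\gg l^3$ this vastly exceeds $l^2+l$. Your derivation ``$\mathrm{ALG}(\sigma)\le (l^2+l)K$ and $\mathrm{OPT}(\sigma)\ge K-1$'' therefore cannot go through. One could try to repair it by using the refined lower bound $\mathrm{OPT}\ge\sum_i m_i/(2l)$ (with $m_i$ the new pages of phase $i$), but then you would need a per-phase upper bound for your algorithm proportional to $m_i$, and here the second, deeper problem appears: your round-robin/oldest-entry LRU rule gives no structural guarantee that a query whose $l$ pages are all present \emph{somewhere} among the $l+1$ caches is fully contained in a \emph{single} cache. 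These ``coverage misses'' are the whole difficulty of the distributed problem, and neither your charging scheme nor a potential function over LRU ranks addresses them.

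The paper takes a fundamentally different route that is designed precisely to eliminate coverage misses. It maintains a single \emph{virtual} cache of size $k+gl$ with $g=\lfloor k/l^2\rfloor$, runs ordinary file-bundle LRU on it, and realizes it physically by letting cache $i$ be the virtual cache minus the $i$-th of $l+1$ disjoint blocks of $gl$ pages (Algorithm \ref{alg:reduction}). Since a query has only $l$ pages, it can intersect at most $l$ of the $l+1$ removed blocks, so any query contained in the virtual cache is fully contained in at least one physical cache (Lemma \ref{thrm5}); the distributed system and the virtual cache then hit and miss identically. The analysis is the single-cache phase argument run at scale $k+gl$: phases are delimited by $k+gl+1$ distinct pages, OPT (with its size-$k$ cache) incurs at least $g+1$ misses per phase from $xl+k\ge k+gl+1$, LRU on the virtual cache incurs at most $k+gl-l+1$ misses per phase, and the ratio is at most $l^2+l$. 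If you want to salvage your approach, you would need to either adopt this block-deletion structure (or some other $l$-dense-family construction, cf.\ Definition \ref{def4}) to guarantee single-cache coverage, or find a genuinely new argument bounding coverage misses for an ad hoc update rule; as written, the proposal does not contain a proof.
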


\begin{theorem} \label{thrm4}
	For $m = l+1$ identical caches in the system, there is a randomized distributed caching algorithm with a competitive ratio of $r=2l\cdot(\ln(2l+1)-\ln\ln l+\frac{1}{2})$. 
\end{theorem}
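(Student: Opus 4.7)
The strategy is to view the $m=l+1$ physical caches as implementing a single ``virtual cache'' $V_t$ of effective size $k'$ satisfying $k'/(k'-k)=2l+1$, and then reduce the distributed problem to the generalized $(h,k')$-paging problem (Theorem~\ref{thrm1}) against the benchmark $h=k$. Solving $k'/(k'-k)=2l+1$ gives the target virtual-cache size $k'=(2l+1)k/(2l)$.

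\textbf{Step 1 (layout invariant).} I would maintain an invariant on the $l+1$ physical caches under which their ``exclusion sets'' $E_t^i := V_t \setminus C_t^i$ are pairwise disjoint. Because $m=l+1$ is one more than the query length, any $l$-subset $S \subseteq V_t$ can meet at most $l$ of the $l+1$ disjoint exclusion sets, hence there is some $i^\star$ with $S \cap E_t^{i^\star}=\emptyset$, i.e.\ $S \subseteq C_t^{i^\star}$. Thus every query contained in $V_t$ is already served by at least one physical cache. A short dimensional count using $|V_t|=|\cap_j C_t^j|+(l+1)|E_t^i|$ together with $|C_t^i|=|V_t|-|E_t^i|=k$ forces $|E_t^i|=k/(2l)$ and $|\cap_j C_t^j|=k/2$, consistent with $|V_t|=(2l+1)k/(2l)$.

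\textbf{Step 2 (lift the marking algorithm).} I would then run the Query-wise Marking Algorithm (Algorithm~\ref{alg:query-wise}) on the virtual cache $V_t$ and translate each of its actions into actions on the physical caches. A mark bit is associated with each page of $V_t$; on a miss ($Q_t \not\subseteq V_t$), the marking rule inserts the missing pages and evicts the same number of unmarked pages uniformly at random. The distributed algorithm couples each insertion with its eviction so that both take place within the same exclusion set (and applies auxiliary page swaps between exclusion sets to restore the disjoint-exclusion-set sizes whenever the random choice would violate them). By construction, every distributed cache miss corresponds to a miss of the Query-wise Marking Algorithm running on $V_t$.

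\textbf{Step 3 (apply Theorem~\ref{thrm1}).} Since $k'/(k'-h)=2l+1 \ge e$ for $l\ge 2$, Theorem~\ref{thrm1} yields
\begin{align*}
\mathbb{E}[\mathrm{ALG}(\sigma)] \le 2l\Bigl(\ln(2l+1) - \ln\ln(2l+1) + \tfrac{1}{2}\Bigr)\,\mathrm{OPT}(\sigma) + c.
\end{align*}
Because $x\mapsto \ln\ln x$ is increasing for $x>1$, we have $\ln\ln(2l+1)\ge \ln\ln l$ whenever $l\ge 2$, which weakens the right-hand side to the announced competitive ratio $2l\cdot(\ln(2l+1)-\ln\ln l + \tfrac{1}{2})$.

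\textbf{Main obstacle.} The crux is Step~2: showing that the random insertion/eviction prescribed by the Query-wise Marking Algorithm can always be implemented synchronously on the physical caches while preserving the disjoint-exclusion-set invariant. When a marked insertion and its coupled eviction would naively land in different exclusion sets, auxiliary swaps must be inserted; verifying that these swaps always exist, involve no additional cache misses, and are compatible with the marking bits of already-present pages is the main technical hurdle.
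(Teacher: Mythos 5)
Your overall route is the same as the paper's: introduce a virtual cache $V_t$ of size slightly larger than $k$, distribute it over the $l+1$ physical caches so that the exclusion sets $E_t^i=V_t\setminus C_t^i$ are pairwise disjoint, use the pigeonhole observation that an $l$-element query contained in $V_t$ must avoid at least one of the $l+1$ disjoint exclusion sets (this is exactly Lemma~\ref{thrm5}), and then apply Theorem~\ref{thrm1} with benchmark $h=k$ and online cache size equal to the virtual-cache size. Your parameter choice $k'=(2l+1)k/(2l)$ differs only cosmetically from the paper's $k'=k+gl$ with $g=\lfloor k/l^2\rfloor$, which yields $k'/(k'-k)\le 2l+1$ after accounting for the floor; you should say a word about the integrality of $k/(2l)$, but that is a minor repair.

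The one substantive issue is that the ``main obstacle'' you flag in Step~2 is not an obstacle in this model, and your argument should not leave it open. The distributed cost function $f(C_t,Q_t)=\prod_i\mathbbm{1}_{\{Q_t\not\subseteq C_t^i\}}$ charges only for query misses; it charges nothing for rearranging pages among the physical caches between queries. Consequently there is no need to couple each virtual insertion with an eviction inside the same exclusion set, nor to design auxiliary swaps or worry about their interaction with the mark bits: after every update of the virtual cache you may simply re-partition its new content into $l+1$ disjoint blocks of the required size and set $C_{t+1}^i=C_{t+1}^*$ minus the $i$-th block, from scratch. This is exactly what Algorithm~\ref{alg:reduction} does, and it preserves the disjoint-exclusion-set invariant by construction at every step; the mark bits live on the virtual cache only, so no compatibility condition on the physical caches arises. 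With Step~2 replaced by this observation, your Steps~1 and~3 complete the proof exactly as in the paper.
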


The key element in establishing the above results is to reduce the problem to that of a single-cache case by introducing a so-called \emph{virtual} cache. For this purpose, let $g=\floor{\frac{k}{l^2}}$, where we recall that $k$ is the size of every cache, and $l$ is the length of every query. Let us consider a virtual cache of size $k+gl$ whose content at time $t$ is denoted by $C^*_t$. Upon the arrival of a new query at time $t$, we first update the virtual cache from $C_t^*$ to $C_{t+1}^*$ using the earlier LRU or marking algorithms. Accordingly, we update the actual set of caches from $C_t=\{C^i_t\}_{i=1}^m$ to $C_{t+1}=\{C^i_{t+1}\}_{i=1}^m$ in such a way to mimic the responsive behavior of the virtual cache with respect to the future queries.

\begin{algorithm}\caption{Multiple to Single Cache Reduction}\label{alg:reduction} 
	Maintain a virtual cache of size $k+gl$, and denote its content at time $t$ by $C_t^* = \{c_t^1,c_t^2,\cdots,c_t^{k+gl}\}$. After query $Q_t$ arrives, update $C_t^*$ to $C_{t+1}^*$ using the LRU/marking algorithm. Distribute the content of the updated virtual cache $C^*_{t+1}$ among $m=l+1$ caches of size $k$ as follows:
	\begin{equation*}
	\begin{split}
	C_{t+1}^1 &= C_{t+1}^* - \{c_{t+1}^1,c_{t+1}^2,\cdots,c_{t+1}^{gl}\},\\
	C_{t+1}^2 &= C_{t+1}^* - \{c_{t+1}^{gl+1},c_{t+1}^{gl+2},\cdots,c_{t+1}^{2gl}\},\\
	\vdots&\\
	C_{t+1}^{l+1} &=C_{t+1}^* - \{c_{t+1}^{gl^2+1},c_{t+1}^{gl^2+2},\cdots,c_{t+1}^{gl^2+gl}\}.\\
	\end{split}
	\end{equation*}
	Note that since $gl^2+gl \leq k+gl$, the above construction is well-defined. 
\end{algorithm}

Next, we show that following Algorithm \ref{alg:reduction} the ability to respond to future queries using either actual set of caches or the virtual cache is indeed the same.

\begin{lemma} \label{thrm5}
	For every query $Q_t\subset\mathcal{O},\ |Q_t|=l$, we have
	$$f(C_t,Q_t) = f(C_t^*,Q_t),$$
	where $f(C_t,Q_t)\!=\!\prod_{i=1}^{l+1} \mathbbm{1}_{\{Q_t\not\subseteq C_t^i\}}, f(C_t^*,Q_t)\!=\!\mathbbm{1}_{\{Q_t\not\subseteq C_t^*\}}$.
\end{lemma}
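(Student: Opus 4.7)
The plan is to prove the two-way implication $Q_t \subseteq C_t^* \iff \exists\, i \in \{1,\dots,l+1\} : Q_t \subseteq C_t^i$, from which the equality of the indicator-valued cost functions follows immediately.

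One direction is essentially free: by the construction in Algorithm \ref{alg:reduction}, each $C_t^i$ is obtained from $C_t^*$ by deleting a block of pages, so $C_t^i \subseteq C_t^*$. Hence if $Q_t \subseteq C_t^i$ for some $i$, then automatically $Q_t \subseteq C_t^*$, which gives $\mathbbm{1}_{\{Q_t \not\subseteq C_t^*\}} \le \prod_{i=1}^{l+1}\mathbbm{1}_{\{Q_t \not\subseteq C_t^i\}}$.

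The substantive direction is the converse: assuming $Q_t \subseteq C_t^*$, produce an index $i^\ast$ with $Q_t \subseteq C_t^{i^\ast}$. I would let $B_i := C_t^* \setminus C_t^i$ be the "hole" block evicted to form the $i$-th physical cache; by inspection of the construction, $B_i = \{c_t^{(i-1)gl+1},\dots,c_t^{igl}\}$, so the $l+1$ blocks $B_1,\dots,B_{l+1}$ are pairwise disjoint subsets of $C_t^*$ (and the indices $gl^2+gl \le k+gl$ guarantee that this is well defined). Now invoke a pigeonhole argument: if $Q_t$ intersected every one of the $l+1$ disjoint blocks $B_i$, then $Q_t$ would contain at least one distinct element from each, forcing $|Q_t| \ge l+1$, contradicting $|Q_t| = l$. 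Thus there is some $i^\ast$ with $Q_t \cap B_{i^\ast} = \emptyset$, and since $Q_t \subseteq C_t^*$, we get $Q_t \subseteq C_t^* \setminus B_{i^\ast} = C_t^{i^\ast}$, as required.

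Combining the two directions, $\mathbbm{1}_{\{Q_t \not\subseteq C_t^*\}} = \prod_{i=1}^{l+1}\mathbbm{1}_{\{Q_t \not\subseteq C_t^i\}}$, which is the claimed identity $f(C_t^*,Q_t) = f(C_t,Q_t)$. I should also briefly address the degenerate case $g = 0$ (i.e.\ $k < l^2$), where every $C_t^i$ coincides with $C_t^*$ and both sides collapse to $\mathbbm{1}_{\{Q_t \not\subseteq C_t^*\}}$, so the identity still holds trivially. I do not expect any real obstacle here: the entire content of the lemma is packaged in the construction of Algorithm \ref{alg:reduction}, which was designed precisely so that the number of disjoint eviction blocks ($l+1$) strictly exceeds the query length ($l$), and the main step is just to notice and apply the pigeonhole principle.
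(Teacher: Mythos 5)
Your proposal is correct and follows essentially the same argument as the paper: one direction from $C_t^i \subseteq C_t^*$, and the other from the pairwise disjointness of the $l+1$ eviction blocks combined with $|Q_t|=l$ (the paper phrases this as a contradiction yielding $|Q_t|\ge l+1$, which is the same pigeonhole step). Your extra remark on the degenerate case $g=0$ is a harmless addition not present in the paper.
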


\begin{proof}
	First we show that $f(C_t,Q_t) \ge f(C_t^*,Q_t)$. Otherwise, if $f(C_t,Q_t) < f(C_t^*,Q_t)$, this means that $f(C_t,Q_t)=0$ and $f(C_t^*,Q_t)=1$. But in that case,
	\begin{equation*}
	\begin{split}
	f(C_t^*,Q_t) =1 &\Rightarrow  Q_t\not\subseteq C_t^*\\
	&\Rightarrow  \exists q\in Q_t, q \not\in C_t^*\\
	&\Rightarrow \forall i \in\{1,2,\cdots,l+1\}, q \not\in C_t^i\\
	&\Rightarrow \forall i \in\{1,2,\cdots,l+1\},\mathbbm{1}\{Q_t\not\subseteq C_t^i\}=1\\
	&\Rightarrow f(C_t,Q_t)= \prod_{i=1}^{l+1} \mathbbm{1}\{Q_t\not\subseteq C_t^i\}=1,
	\end{split}
	\end{equation*}
	where the third line follows from the construction of $C_t$, as we have $C_t^i \subseteq C_t^*, \forall i$. This contradiction shows that $f(C_t,Q_t) \ge f(C_t^*,Q_t)$.
	
	Next, we show that $f(C_t,Q_t) \le f(C_t^*,Q_t)$. Otherwise, if $f(C_t,Q_t) > f(C_t^*,Q_t)$, then $f(C_t^*,Q_t)=0$ and $f(C_t,Q_t)=1$. This means that $Q_t\subseteq C_t^*$, and $Q_t\not\subseteq C_t^i =C_{t}^* - \{c_{t}^{(i-1)gl+1},c_{t}^{(i-1)gl+2},\cdots,c_t^{igl}\} , \forall i$. Thus,
	\begin{align}\nonumber
	|Q_t\cap\{c_{t}^{(i-1)gl+1},c_{t}^{(i-1)gl+2},\cdots,c_t^{igl}\}|\ge 1,\forall i,
	\end{align}
	and we can write,
	\begin{equation*}
	\begin{split}
	|Q_t| &= |Q_t\cap C_t^*|\\
	&\ge|\cup_{i=1}^{l+1} (Q_t\cap\{c_{t}^{(i-1)gl+1},c_{t}^{(i-1)gl+2},\cdots,c_t^{igl}\})|\\
	&=\sum_{i=1}^{l+1}|Q_t\cap\{c_{t}^{(i-1)gl+1},c_{t}^{(i-1)gl+2},\cdots,c_t^{igl}\}|\ge l+1.
	\end{split}
	\end{equation*}
	This contradiction shows that $f(C_t,Q_t) \le f(C_t^*,Q_t)$, which completes the proof.
\end{proof}

Using the above lemma, we next proceed to prove Theorems \ref{thrm3} and \ref{thrm4}.

{\bf \emph{Proof of Theorem \ref{thrm3}}:} We use the same technique as in the proof of Theorem \ref{thm:deterministic-LRU}. As before, we divide the request sequence $\sigma$ into phases where Phase 1 begins at the first page of $\sigma$, and Phase $i$ begins at the first query, which contains the $(k+gl+1)$-th distinct page since Phase $i-1$ has begun. We show that OPT makes at least $g+1$ cache miss each time a new phase begins. Denote the $j$-th query in phase $i$ by $Q_j^i$, and let $g^i$ be the number of queries in Phase $i$. Consider queries $Q_2^i,Q_3^i,\cdots,Q_{g^i}^i$ and $Q_1^{i+1}$. Note that at the end of the query $Q_1^i$, the virtual cache must contain all the pages requested by $Q_1^i$, and according to our definition of phases, at least $k+gl+1$ distinct pages are requested by queries $Q_1^i,Q_2^i,\cdots,Q_{g^i}^i, Q_1^{i+1}$. Let $x$ be the number of cache misses of the OPT in answering queries $Q_2^i,Q_3^i,\cdots,Q_{g^i}^i, Q_1^{i+1}$. Then $xl+k\ge k+gl+1$, with the left-hand side being an upper bound for the maximum number of distinct pages in queries $Q_1^i,Q_2^i,\cdots,Q_{g^i}^i,Q_1^{i+1}$. Therefore, $x \ge g+1$. 

On the other hand, following the same argument as in the proof of Theorem \ref{thm:deterministic-LRU}, since at each time the virtual cache is updated based on LRU, we conclude that the virtual cache makes at most $k+gl-l+1$ cache misses during each phase. Thus denoting the number of phases by $K$, we have $\mathrm{OPT}(\sigma) \ge (K-1)(g+1)$ and $\mathrm{DLRU}(\sigma) \le (k+gl-l+1)K$. As $\frac{k+gl-l+1}{g+1} \le \frac{(g+1)l^2+(g+1)l}{g+1} = l^2+l$, therefore the distributed LRU algorithm has a competitive ratio of $l^2+l$. This in view of Lemma \ref{thrm5} completes the proof. \hfill $\blacksquare$

\smallskip
{\bf \emph{Proof of Theorem \ref{thrm4}}:} From Theorem \ref{thrm5} and Lemma \ref{thrm5} we know that the competitive ratio of the distributed marking algorithm is at most the competitive ratio of the modified marking algorithm for the generalized $(k,k+gl)$-paging problem (note that the size of the algorithm cache is now replaced by that of the virtual cache, i.e., $h=k+gl$). Therefore, using the result of Theorem \ref{thrm1}, we get:
\begin{align*}
r&\le 2l\cdot(\ln\frac{k+gl}{k+gl-k}-\ln\ln\frac{k+gl}{k+gl-k}+\frac{1}{2})\\
&\le 2l\cdot(\ln\frac{(g+1)l^2+gl}{gl}-\ln\ln\frac{(g+1)l^2+gl}{gl}+\frac{1}{2})\\
&\le 2l\cdot(\ln(2l+1)-\ln\ln l+\frac{1}{2}).
\end{align*}
\hfill $\blacksquare$

\subsection{Generalization to Arbitrary Number of Caches}
As we showed earlier, when there are $m=l+1$ caches in the system, the distributed caching algorithms that are based on virtual cache reduction will give us competitive algorithms. Here we attempt to generalize this idea to larger ranges of $m$. We follow the same idea as in Algorithm \ref{alg:reduction}, which is to reduce the distributed caching problem to a single-cache generalized $(h,k)$-paging problem, and then use the marking (or LRU) algorithm to solve it. This is done by introducing a single virtual cache of a larger size while making sure that the virtual cache and the actual caches perform the same in responding to any query. Unfortunately, when $m\neq l+1$, the construction given in Algorithm \ref{alg:reduction} is no longer valid. In that case, we can distribute our virtual cache's content among a set of $m$ caches using a generalized notion of $r$-dense families \cite{rodl1985packing}. Such construction will again preserve the responsive equivalence between the virtual and the actual caches. However, as we shall see, the number of caches needed to achieve a substantial decrease in the competitive ratio is prohibitively large, making this approach practically unrealizable. Nevertheless, we present our results here, as it may indicate a fundamental difficulty in the organization of geographically distributed caches to answer file-bundle queries.

\smallskip
\begin{definition}[$r$-dense families \cite{rodl1985packing}]\label{def4}
	Given positive integers $r < k < N$, let $\mathcal{F}$ be a family of $k$-element subsets of $\{1, 2, ... , N\}$. We say that $\mathcal{F}$ is $r$-dense if any $r$-element subset of $\{1, 2, ... , N\}$ is contained in at least one member of $\mathcal{F}$.
\end{definition}

\begin{lemma}(\cite{rodl1985packing})\label{lem2}
	Let $M(N, k, r)$ be the minimal number of elements of a $r$-dense family $\mathcal{F}$. Then,
	\begin{align}\label{eq:rdense}
	M(N,k,r) \le \frac{{N\choose r}}{{k\choose r}}(1+\log{k\choose r}).
	\end{align}
\end{lemma}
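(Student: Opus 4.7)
The plan is to use a two-stage probabilistic construction, a standard technique for covering designs. In Stage 1 I would pick $t$ independent, uniformly random $k$-subsets of $\{1,\ldots,N\}$ and place them in a family $\mathcal{F}_0$, with $t$ tuned so that the expected number of $r$-subsets not contained in any member of $\mathcal{F}_0$ is on the order of $\binom{N}{r}/\binom{k}{r}$. In Stage 2 I would walk through each $r$-subset missed by $\mathcal{F}_0$ and append to $\mathcal{F}_0$ one arbitrary $k$-subset containing it (such a superset exists because $r<k\le N$); the resulting family $\mathcal{F}$ is $r$-dense by construction. The point of splitting the argument into two stages is to balance the Stage 1 sample size against the Stage 2 deterministic cost, which is what will produce the factor $1+\log\binom{k}{r}$ rather than the naive $\log\binom{N}{r}$ that a pure greedy/random analysis yields.

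For the execution, the key computation is that for any fixed $r$-subset $R\subseteq\{1,\ldots,N\}$, the probability that a uniformly random $k$-subset contains $R$ equals
\[
p \;:=\; \frac{\binom{N-r}{k-r}}{\binom{N}{k}} \;=\; \frac{\binom{k}{r}}{\binom{N}{r}},
\]
so the probability that $R$ is uncovered after Stage 1 is $(1-p)^t\le e^{-pt}$. By linearity of expectation the expected number of uncovered $r$-subsets is therefore at most $\binom{N}{r}e^{-pt}$. Choosing $t = \bigl\lceil (1/p)\log\binom{k}{r}\bigr\rceil$ drives this bound down to $\binom{N}{r}/\binom{k}{r}$, so the expected total size of $\mathcal{F}$ after both stages is
\[
|\mathcal{F}| \;\le\; t + \frac{\binom{N}{r}}{\binom{k}{r}} \;\le\; \frac{\binom{N}{r}}{\binom{k}{r}}\bigl(1+\log\tbinom{k}{r}\bigr).
\]
The probabilistic method then guarantees at least one realization of $\mathcal{F}$ meeting the claimed bound, proving the lemma.

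I do not expect a substantial obstacle in this argument: once the two-stage scheme is in place, everything reduces to linearity of expectation and the inequality $(1-p)^t\le e^{-pt}$. The only subtle decision is the choice of threshold $t$, which is essentially forced by equating the two contributions to $|\mathcal{F}|$; any significantly smaller $t$ leaves too many uncovered $r$-subsets for Stage 2 to patch cheaply, and any significantly larger $t$ makes Stage 1 itself too costly. So the balance between the two stages is really the only place where care is required.
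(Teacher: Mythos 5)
The paper does not actually prove this lemma; it is imported verbatim from R\"odl's packing paper, so there is no in-paper proof to compare against. Your two-stage ``alteration'' argument is the standard proof of this classical covering bound (it is essentially the Erd\H{o}s--Spencer/R\"odl argument): the identity $\binom{N-r}{k-r}/\binom{N}{k}=\binom{k}{r}/\binom{N}{r}$ is correct, the independence of the $t$ draws justifies $(1-p)^t\le e^{-pt}$, and Stage~2 patches each uncovered $r$-set with a superset, so the construction is sound. The one place where your execution is slightly loose is the final display: with $t=\lceil (1/p)\ln\binom{k}{r}\rceil$ you only get $|\mathcal{F}|\le \frac{\binom{N}{r}}{\binom{k}{r}}\ln\binom{k}{r}+1+\frac{\binom{N}{r}}{\binom{k}{r}}$, i.e.\ the claimed bound plus an additive $1$ coming from the ceiling, and the last inequality as written silently drops that $+1$. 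This is cosmetic rather than structural: you can recover the clean constant either by minimizing $t+\binom{N}{r}(1-p)^t$ over real $t$ (which gives $\frac{1+\ln(qM)}{q}$ with $q=-\ln(1-p)>p$, $M=\binom{N}{r}$, and the strict gap between $q$ and $p$ absorbs the rounding in all but degenerate parameter ranges), or by running the greedy/fractional-cover argument of Lov\'asz--Stein, which yields $\frac{\binom{N}{r}}{\binom{k}{r}}\bigl(1+\ln\binom{k}{r}\bigr)$ with no rounding at all. For the purpose this lemma serves in the paper --- estimating the order of magnitude of the number of caches $m$ --- the $+1$ is immaterial, so your proof is acceptable as is, but you should either flag the slack or state the bound with the extra additive constant.
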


To see how to use $r$-dense families in our virtual cache reduction, using Definition \ref{def4} and Lemma \ref{lem2} to our problem setting we can maintain a virtual cache of size $k^*$ if we have a set of $m=M(k^*,k,l)\le\frac{{k^*\choose l}}{{k\choose l}}(1+\log{k\choose l})$ caches, each of size $k$. Therefore, selecting the contents of our $m$ actual caches to be $k$-element subsets of the virtual cache that form a $l$-dense family, we are guaranteed that the virtual cache and the actual ones respond identically to every query. This is because if a query of size $l$ misses the virtual cache, so do the actual caches (note that every actual cache is a subset of the virtual cache). On the other hand, if a query of size $l$ (viewed as a $l$-element subset) hits the virtual cache, since the collection of actual caches forms a $l$-dense family, so the query must be contained in at least one of the $k$-subsets (actual cache) of the virtual cache. Therefore, the requested query will also hit at least one of the actual caches. This establishes the equivalence between the virtual and the actual caches.

Now using the result of Theorem \ref{thrm1}, we know that applying the modified marking algorithm on the virtual cache when compared with an optimal offline benchmark of cache size $k$, the competitive ratio is at most,
\begin{align}\label{eq:general-distributed}
r=2l\cdot (\ln \frac{k^*}{k^*-k}-\ln \ln \frac{k^*}{k^*-k}+\frac{1}{2}).
\end{align}
This shows that the competitive ratio of the generalized distributed caching algorithm with $m=M(k^*,k,l)$ caches of size $k$ is at most \eqref{eq:general-distributed}.

	\begin{remark}
		To interpret the result, using approximation and neglecting less important terms, we can compute from equation (\ref{eq:rdense}) and (\ref{eq:general-distributed}) that for some $h <l$, if we want the randomized distributed algorithm to have a competitive ratio of $r = 2l\ln(\frac{l}{h})$, then a number of approximately $m\approx e^hl\log\frac{k}{l}$ caches are needed. The prohibitively large number makes this approach practically unrealizable. Although we do not give a lower bound on the competitive ratio of any distributed algorithm for file-bundle caching at this point, this negative result provides some insight into the difficulty of designing competitive distributed algorithms for the file-bundle caching problem. 
	\end{remark}

\section{Simulations}\label{sec:simulation}
We design a simulation model to explore the practical performance of our proposed algorithms. For the sake of comparison, we use two benchmark algorithms. One is \emph{random eviction}, which, whenever needed, will uniformly at random selects a file from the cache and evict it. The other is the \emph{farthest in future} (FF) algorithm, which, knowing all future queries, will evict the files that will be requested farthest in the future whenever evictions are needed. The FF algorithm is proved to be the optimal offline algorithm for the standard caching problem (i.e., when $l=1$)\cite{mattson1970evaluation}. In the file-bundle caching problem, the FF algorithm can be viewed as a greedy approach to the problem, i.e., always evict the most ''useless'' files in the cache, which are the files that will be requested farthest in the future. Therefore, although the FF algorithm may not be the optimal offline algorithm for file-bundle caching, as is shown in Theorem \ref{thrm6} (Appendix I), it provides a $2l$-approximation for the optimal offline algorithm. Thus, we use the FF algorithm as a proxy for the optimal offline algorithm for the file-bundle caching problem, mainly because computing the exact optimal offline algorithm can be very time-consuming\footnote{In fact, how to efficiently compute the optimal offline algorithm for file-bundle caching itself is an interesting open problem.}.

	\subsection{Workload Characterization and  Simulation Parameters}
	Although file-bundle is the mode of file requests in much data-intensive scientific analysis, most workload traces and logs maintained by scientific data centers are on a per-file basis and do not give information about requested file-bundles. Apart from that, there are even fewer efforts made to derive workload traces and logs of file-bundle caching activities in other environments such as web-caching.

In the absence of workload traces and logs available for experimentation, we construct simulated workloads consisting of sequences of queries each request a file-bundle selected from a large set of candidate files and test our proposed algorithms' performance on these simulated workloads. We choose parameters such that our simulated workload is as close as possible to observed real experiments that trace single file requests. We use \emph{miss ratio} as the performance measurement metrics. In the following, we summarize the major parameters of our simulation. During the experiments, these parameters are varied to observe the impact on the performance of the algorithms.

\paragraph{Popularity Distribution:}
The popularity distribution considers how single files are selected into the file-bundles requested by queries. We examine two distributions' effects: a uniformly random distribution, where all the candidate files have the same probability of being selected by any query, and a Zipf distribution. In a Zipf distribution, the probability of selecting the $i^{th}$ most popular element is proportional to $\frac{1}{i^s}$, with $s$ being the value of the exponent characterizing the distribution. In our experiments, we first generate a number of candidate queries with a fixed length. Each of these candidate queries is generated independently and identically, and the probability of presence for each file in any particular candidate query follows a Zipf distribution. We then generate the real queries from these candidate queries such that the probability that a candidate query to be among the real queries obeys another Zipf distribution. In the experiments, the exponent value, $s$, for both Zipf distributions is set to be $1$.

\paragraph{Size of the Cache and Length of the Queries:}
The size of the cache $k$ and the length of the queries $l$ are the two most important parameters in our theoretical analysis, and perhaps they are also the parameters that have the most impact on the practical performance of the file-bundle caching algorithms. Naturally, we will expect increasing $k$ will reduce the miss ratio while increasing $l$ will increase the miss ratio.

\paragraph{Size of the data set:}
The size of the data set corresponds to the total number of candidate files in our simulation experiments. Although less discussed in the theoretical analysis above, the size of the data set can substantially impact the performance of the algorithms. Consider two extreme cases: when the size of the data set is very small, say less than or equal to the size of the cache, then we can store all the files in the cache and never incur any cache miss; on the other hand, if the data set is extremely large and all the files are approximately equally likely to be selected into the request, then we would expect even the offline optimal algorithm to have a miss ratio of nearly $100\%$ since every query is very likely to contain a ``new'' file, i.e., a file that is never requested before.

	\subsection{Simulation Results}
	This section presents some simulation results of our proposed algorithms to show the impact of some representative parameters such as varying cache size, varying query length, varying size of the data set, and different request distributions. In each simulation run, the \emph{miss ratio} is calculated by running the algorithm on a sequence of $40,000$ queries.\footnote{All the simulations are performed in Matlab and are run by a computer with 3.1 GHz Intel Core i5 CPU and 8GB memory.}

The effect of varying cache size of $k$ is summarized in figure \ref{fig1000}. In this set of experiments, we set the size of the data set to be $N=10,000$, the length of the queries to be $l=10$. In figure \ref{fig1001}, we let the requested files follow the Zipf distribution, and in figure \ref{fig1002} we let the requested files follow the random distribution. The LRU algorithm outperforms the query-wise marking algorithm, and as the size of the cache increases, the \emph{miss ratio} of all the algorithms decreases. When the requested files follow the Zipf distribution, the performance gap between the LRU algorithm, query-wise marking algorithm, and the FF algorithm shrinks as $k$ increases. This is because as $k$ increases, all the three algorithms can handle frequently seen queries quite well, and for rarely seen queries, none of the algorithms can handle them well. On the other hand, when the requested files follow the random distribution, the performance gap remains approximately the same because there are no frequently seen queries nor rarely seen queries.
\begin{figure} 
	\subfigure[Zipf Distribution]{ 
		\includegraphics[width=0.48\linewidth]{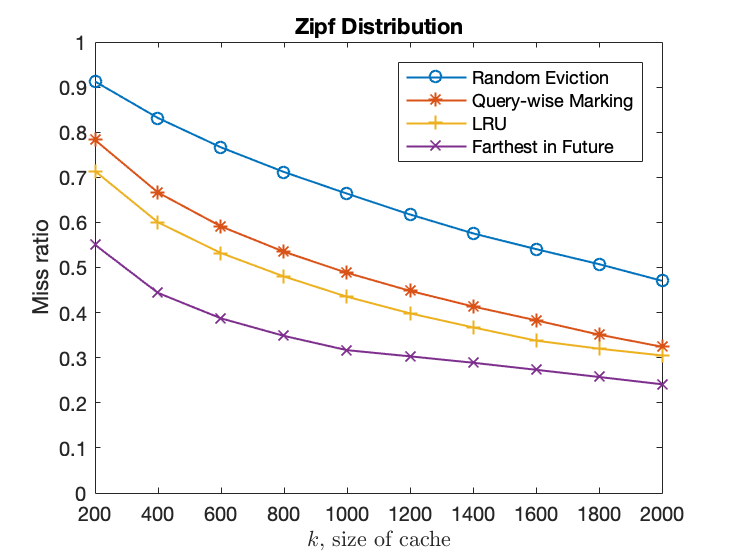}\label{fig1001}}
	\subfigure[Random Distribution]{
		\includegraphics[width=0.48\linewidth]{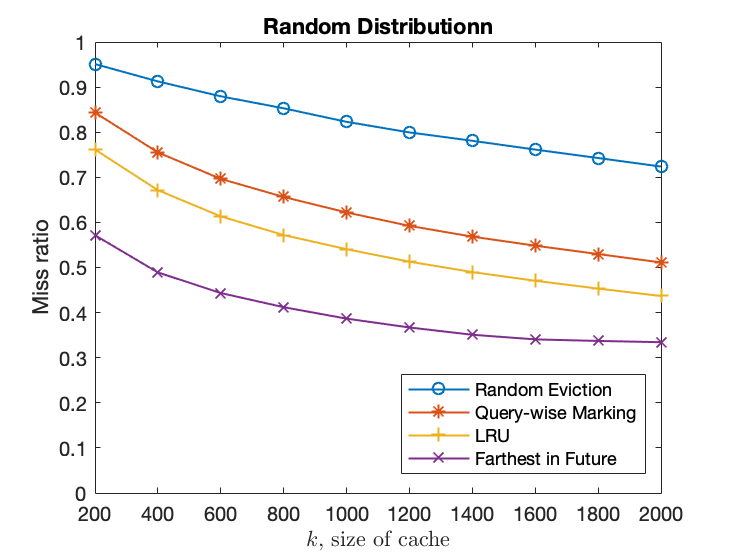}\label{fig1002}}
	\caption{Effect of varying cache size $k$ with data set size $N=10,000$ and query length $l=10$}\label{fig1000}
\end{figure}  

The effect of varying query length $l$ is summarized in Figure \ref{fig2000}. In this set of experiments, we set the size of the data set to be $N=10,000$, the cache's size to be $k=600$. In Figure \ref{fig2001}, we let the requested files follow the Zipf distribution, and in figure \ref{fig2002} we let the requested files follow the random distribution. As the length of the queries increases, the \emph{miss ratio} of all the algorithms increases. In fact, from the graph, we can see that the effect of increasing query length $l$ with fixed cache size $k$ is similar to that of decreasing cache size $k$ with fixed query length $l$.

\begin{figure} 
	\subfigure[Zipf Distribution]{ 
		\includegraphics[width=0.48\linewidth]{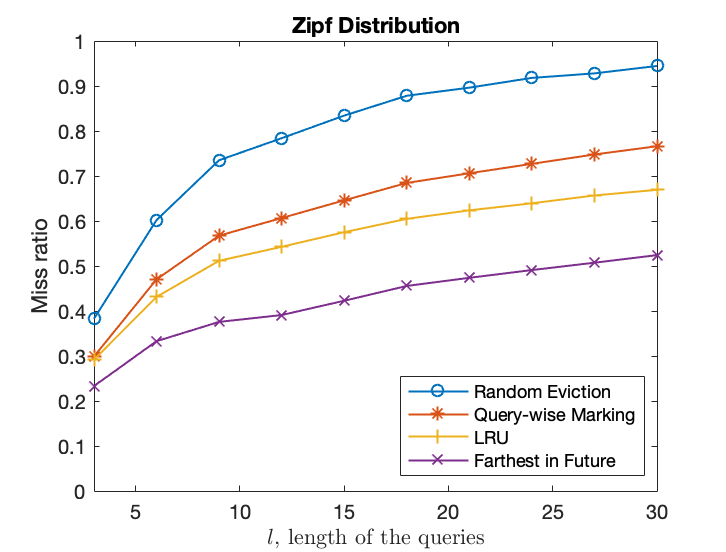}\label{fig2001}}
	\subfigure[Random Distribution]{
		\includegraphics[width=0.48\linewidth]{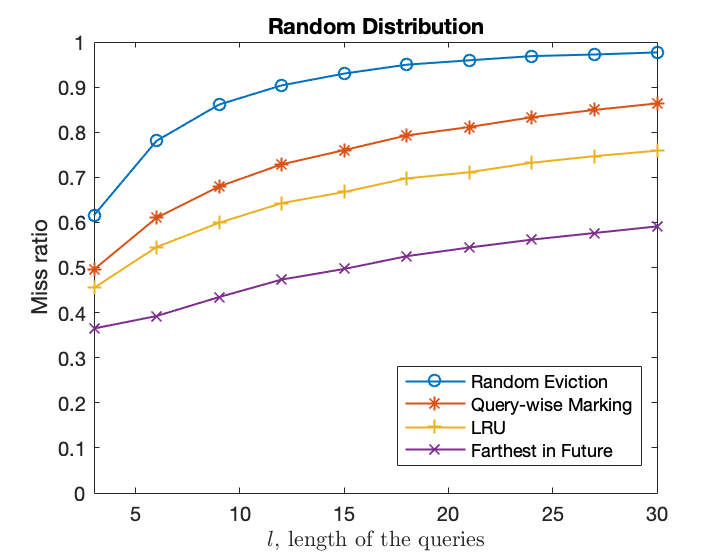}\label{fig2002}}
	\caption{ Effect of varying query length $l$ with data set size $N=10,000$ and cache size $k=600$}\label{fig2000}
\end{figure}  

The effect of varying data set size of $N$ is summarized in figure \ref{fig3000}. In this set of experiments, we set the size of the cache to be $k=800$, the length of the queries to be $l=10$. In Figure \ref{fig3001}, we let the requested files follow the Zipf distribution, and in Figure \ref{fig3002}, we let the requested files follow the random distribution. We run simulations for data set size $N = 800, 1000, 2000, 5000, 10000$, $20000, 50000, 100000$. From the graph, we can see that the \emph{miss ratio} of the algorithms increases dramatically as $N$ increases when $N\le 10000$, and becomes stable after that. We conclude that when the size of the cache $k$ is comparable to the size of the data set $N$, their relative size can substantially affect the performance of file-bundle caching algorithms. However, when $k<<N$, then their relative size will not matter so much. Instead, the relative size of the cache size $k$ and the query length $l$ dominates the performance of file-bundle caching algorithms.
\begin{figure} 
	\subfigure[Zipf Distribution]{ 
		\includegraphics[width=0.48\linewidth]{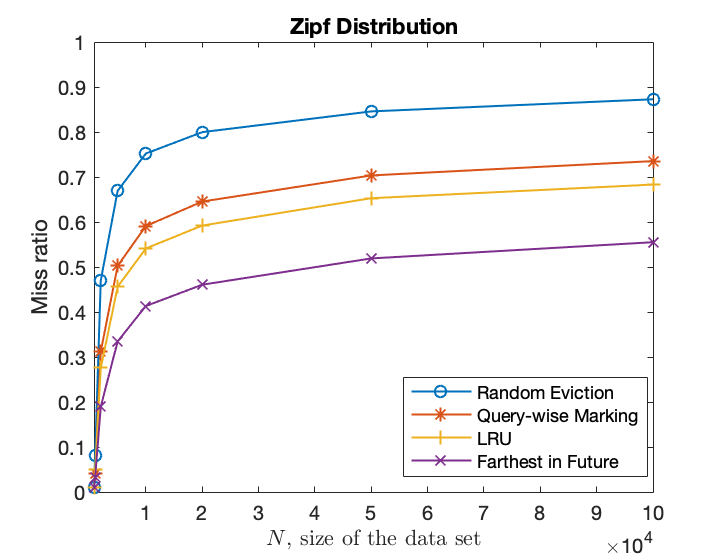}\label{fig3001}}
	\subfigure[Random Distribution]{
		\includegraphics[width=0.48\linewidth]{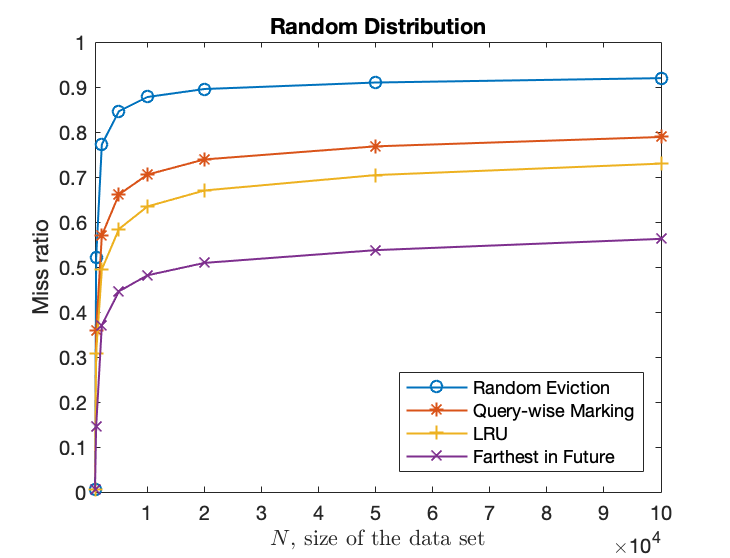}\label{fig3002}}
	\caption{Effect of varying size of the data set $N$  with query length $l=10$ and cache size $k=800$}\label{fig3000}
\end{figure}  

The performance evaluation of our distributed caching algorithm is shown in Figure \ref{fig4000}. In this set of experiments, we set the size of the data set to be $N=10,000$. We vary the length of the queries $l$, and with every $l$, the performance of a set of $m=l+1$ caches with the same cache size $k = 80l$ is tested using the distributed caching algorithm of Theorem \ref{thrm4}. For comparison, the performance of a monolithic cache with cache size $k' = 160l$ is also tested using both the query-wise marking algorithm and the FF algorithm. In Figure \ref{fig4001}, we let the requested files follow the Zipf distribution, and in Figure \ref{fig4002}, we let the requested files follow the random distribution. We run simulations for query length $l = 4,8,\ldots,40$. From the graph, we can see a performance gap between distributed caching and a monolithic cache, and that gap becomes greater as we increase $l$, especially when requested files follow the Zipf distribution. This suggests that for large $l$ distributed caching may not be very useful, as the cost increases (the number of caches needed is $m=l+1$) but the improvement of performance decreases. As a by-product, if we focus on the monolithic cache's performance, we can see that as long as the ratio of $\frac{k}{l}$ is fixed, the performance of the file-bundle caching algorithms is relatively stable with varying $l$.
\begin{figure} 
	\subfigure[Zipf Distribution]{ 
		\includegraphics[width=0.48\linewidth]{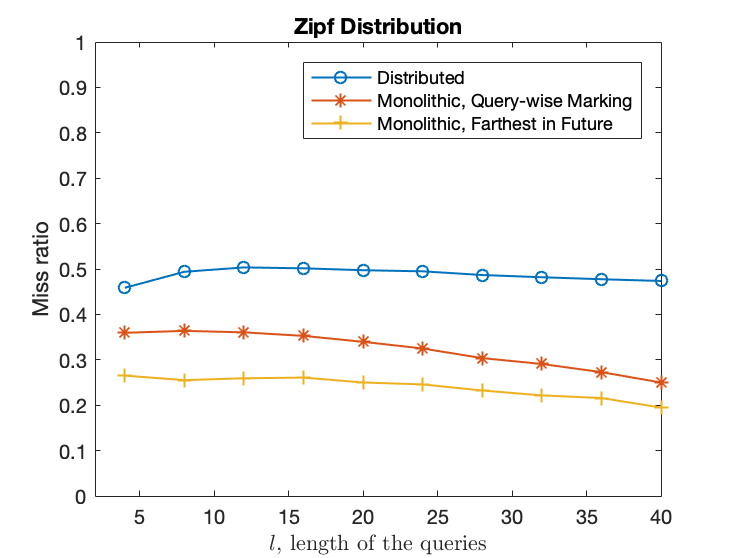}\label{fig4001}}
	\subfigure[Random Distribution]{
		\includegraphics[width=0.48\linewidth]{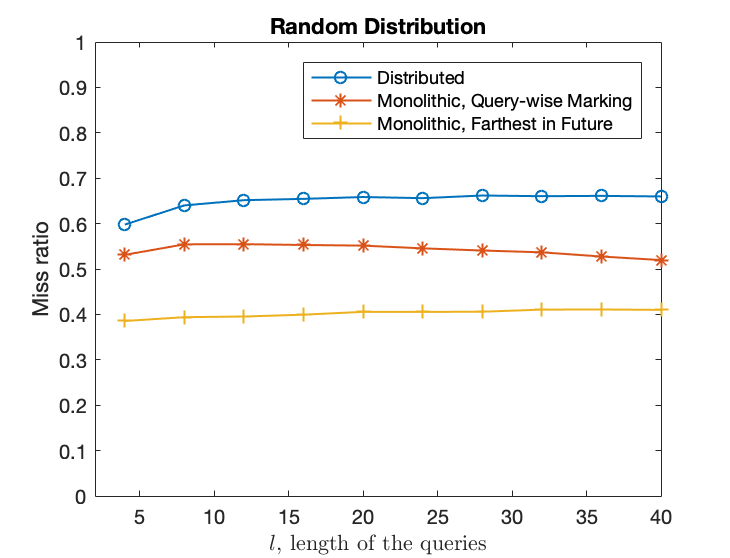}\label{fig4002}}
	\caption{Evaluation of distributed file-bundle caching algorithm}\label{fig4000}
\end{figure}  

Our last experiment considers adversarially generated queries. In this experiment, we set the size of the data set to be $N=10,000$, the size of the cache to be $k=500$, and the queries' length to be $l=10$. Here, rather than letting the requested files follow the Zipf distribution or the random distribution, we generate a sequence of $100,000$ queries adversarially using the following rules:

\begin{itemize}
	\item Every query must request $9$ fixed files: $10000$, $9999$, $9998$, $9997$, $9996$, $9995$, $9994$, $9993$, $9992$, $9991$.
	\item For the remaining $1$ file in the query, we let the first query up to the $491$-st one request the file whose index matches their sequence number, i.e. the first query requests file $1$, the second query requests file $2$, and the $491$-st query requests file $491$.
	\item  Every $491$ queries form a cyclic sequence, that is, the $i$-th query requests exactly the same file as the $(i-491)$-th query, $\forall i\ge 492$.
\end{itemize}

Queries generated as such force the LRU algorithm to suffer a cache miss at every query, while the FF algorithm (which can be shown to be the exact optimal offline algorithm) suffers a cache miss only once every $491$ queries. Here, the goal is to show that although in many instances, the LRU algorithm performs well and even better than the query-wise marking algorithm, there are extreme instances where the LRU algorithm will perform very poorly. This indicates the existence of bad instances where the upper bound for the competitive ratio of the query-wise marking algorithm is much lower than that for the LRU algorithm. The result of this experiment is depicted in Figure \ref{fig5000}. It can be seen that the query-wise marking algorithm performs very good (near-optimal), while the LRU algorithm performs extremely poor.  
\begin{figure} 
	\subfigure[Transient phase]{ 
		\includegraphics[width=0.48\linewidth]{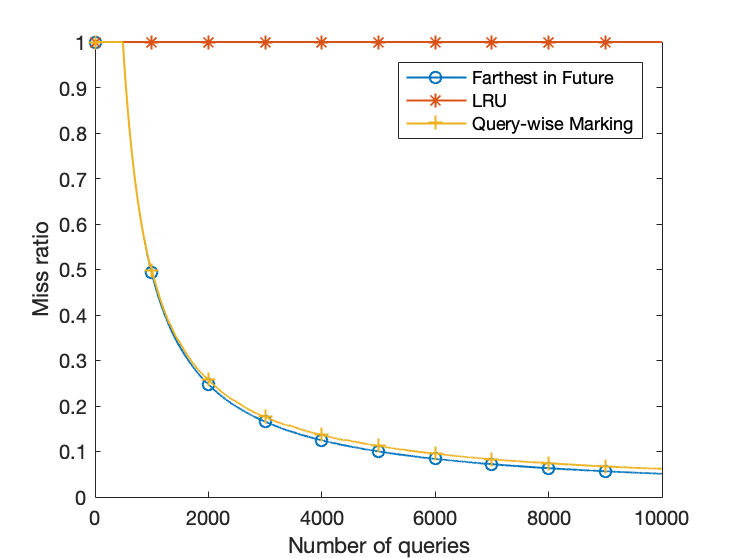}\label{fig5001}}
	\subfigure[Steady-state phase]{
		\includegraphics[width=0.48\linewidth]{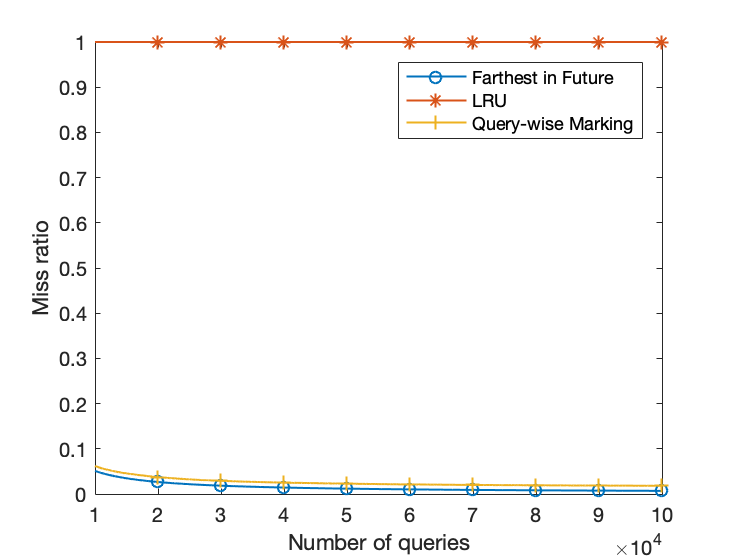}\label{fig5002}}
	\caption{Adversarially generated queries. Altogether the farthest in future algorithm incurred $694$ cache misses out of the $100,000$ queries, the LRU algorithm incurred $100,000$ cache misses, and the query-wise marking algorithm incurred $1823$ cache misses. Left: the transient phase that includes filling up the cache. Right: the steady-state phase. }\label{fig5000}
\end{figure}

\section{Conclusion}\label{sec:conclusion}

In this paper, we studied the file-bundle caching problem. We analyzed the performance of the well-known LRU/marking algorithms for the file-bundle caching problem. We showed that the LRU algorithm is $k$-competitive, and that is nearly the best among online deterministic algorithms. Moreover, we showed that the marking algorithm is $2l\cdot (\ln \frac{k}{k-h}-\ln \ln \frac{k}{k-h}+\frac{1}{2})$-competitive when $\frac{k}{k-h}\ge e$, and $2$-competitive otherwise, and these competitive ratios are within a factor of at most two from the optimal ones. Due to the nearly optimal performance of the LRU/marking algorithms in the case of single-cache file-bundle caching problem, we then used these algorithms to devise efficient deterministic/randomized distributed caching algorithms when there are multiple caches in the system. In particular, for $m=l+1$ caches, we developed a deterministic distributed caching algorithm which is $(l^2+l)$-competitive and a randomized distributed caching algorithm which is $2l\cdot(\ln(2l+1)-\ln\ln l+\frac{1}{2})$-competitive when $l\ge 2$. We also showed that the \emph{farthest in future} (FF) algorithm for the file-bundle caching problem achieves an approximation factor of at most $2l$ compared to the optimal offline algorithm, where $l$ is the length of the queries.

As a future direction of research, an interesting problem is to incorporate the role of different weights of the files into the problem. For instance, instead of setting a unit cost for any cache miss, we can set the cost of a cache miss to be the maximum weight of the files contained in the query (i.e., the $l_{\infty}$-norm of the difference vector). In fact, \cite{bansal2012primal} and \cite{buchbinder2019k} have given $O(\log(\frac{k}{k - h + 1}))$-competitive algorithms for the weighted $(h,k)$-paging problem. Therefore, one can extend these results to the generalized weighted $(h,k)$-paging problem under the file-bundle setting. Finally, we did not establish a lower bound for the competitive ratio of distributed caching algorithms in this work. It would be very interesting to establish tight lower bounds or to develop distributed caching algorithms with better competitive ratios.

\bibliographystyle{ACM-Reference-Format}
\bibliography{thesisrefs}

\section{Appendix I: Auxiliary Lemmas}

\begin{lemma}\label{lemm:max-T}
Let $T(m) := \frac{m+m\ln\frac{k}{m}}{\frac{k-h+m}{l}}$. Then $\max_mT(m)\leq l\cdot (\ln \frac{k}{k-h}-\ln \ln \frac{k}{k-h}+\frac{1}{2})$ if $\frac{k}{k-h}\ge e$, and $\max_mT(m)\leq l$, otherwise.
\end{lemma}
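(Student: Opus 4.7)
The plan is to locate the unique interior maximizer of $T$ on $(0,k]$ by calculus, use the first-order condition to simplify $T(m^*)$ into a one-variable expression, and then dispatch the two cases of the lemma via an elementary Lambert-$W$-type bound.

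First I would differentiate. Using $\frac{d}{dm}[m+m\ln(k/m)]=\ln(k/m)$, the quotient rule gives $T'(m)=\frac{l\bigl[(k-h)\ln(k/m)-m\bigr]}{(k-h+m)^2}$. The bracketed numerator is strictly decreasing in $m$, blows up to $+\infty$ as $m\to 0^+$, and equals $-k$ at $m=k$, so $T$ has a unique stationary point $m^*\in(0,k)$ characterized by $(k-h)\ln(k/m^*)=m^*$, and this is the global maximum on $(0,k]$.

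The key simplification comes from substituting the stationarity condition back into $T$: since $1+\ln(k/m^*)=1+m^*/(k-h)=(k-h+m^*)/(k-h)$, the denominator of $T(m^*)$ cancels and $T(m^*)=\frac{l\,m^*}{k-h}=l\ln(k/m^*)$. Setting $u^*:=k/m^*$, the stationarity condition rewrites as $u^*\ln u^*=k/(k-h)$; writing $c:=k/(k-h)$, this is exactly $\ln u^*=W(c)$, where $W$ denotes the Lambert $W$ function, so $T(m^*)=l\,W(c)$.

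It remains to bound $W(c)$. The easy case is $c\le e$: monotonicity of $u\mapsto u\ln u$ on $[1,\infty)$ forces $u^*\le e$, hence $T(m^*)\le l$, which gives the second bound of the lemma. The main obstacle is the case $c\ge e$, where one must show $W(c)\le \ln c-\ln\ln c+\tfrac12$. I would prove this elementarily: set $w:=W(c)\ge 1$ and use the defining identity $w+\ln w=\ln c$ to rewrite the desired inequality as $\ln(\ln c/w)\le \tfrac12$, equivalently $1+(\ln w)/w\le \sqrt{e}$. Since $(\ln w)/w$ attains its maximum $1/e$ at $w=e$, and $1+1/e<\sqrt{e}$ is an immediate numerical check, the estimate follows and the lemma is established.
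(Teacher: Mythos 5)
Your proof is correct and follows essentially the same route as the paper's: both locate the unique interior stationary point by calculus, use the first-order condition to collapse the maximum value to the solution of the Lambert-$W$-type equation (your $u^*\ln u^*=c$ is the paper's $u+\ln u=\ln s$ after a change of variables), and finish with the bound $W(c)\le\ln c-\ln\ln c+\tfrac12$. The only substantive difference is that you actually verify that last inequality (via $1+\tfrac{\ln w}{w}\le 1+\tfrac1e<\sqrt{e}$), which the paper asserts as ``easy to see.''
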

\begin{proof}
We have $T(m)=l\cdot\frac{m}{k-h+m}(1+\ln\frac{k}{m})$. Let $t := \frac{m}{k-h+m}$, and $s$ to be the constant $s: = \frac{k}{k-h}$. Then,
	\begin{align}\label{eq:T-G-Bound}
	T(m ) = G(t) &:= l\cdot(t+t\ln(s\cdot\frac{1-t}{t}))\cr
	&\le l\cdot(t_0+t_0\ln(s\cdot\frac{1-t_0}{t_0})),
	\end{align}
	where $t_0$ satisfies $\frac{d G}{d t}(t_o) = 0$, or equivalently
\begin{align}\label{eq:t0}
\frac{t_0}{1-t_0}+\ln\frac{t_0}{1-t_0} = \ln s.
\end{align}
Now if $ \frac{k}{k-h}\le e$, we have $0\le \ln s\le 1$ and $0\le t_0\le \frac{1}{2}$. Therefore, $T(m)\le G(t_0)\le l$. Otherwise, if $ \frac{k}{k-h}\ge e$, we have $s\ge e$ and $1\ge t_0\ge \frac{1}{2}$. Let us define $u := \frac{t_0}{1-t_0}$, which also implies $t_0=\frac{u}{1+u}$. Now we can rewrite \eqref{eq:T-G-Bound} and \eqref{eq:t0} in terms of the new variable $u$ as,
\begin{align}\nonumber
&G(t_0)\leq l(t_0+t_0\ln(s\frac{1-t_0}{t_0}))=l\frac{u}{1+u}(1-\ln u+\ln s),\cr 
&u+\ln u=\ln s.
\end{align}	
Combining these two relations we obtain $G(t_0)\leq lu$. Finally, for $s\ge e$ and since $u+\ln u=\ln s$, it is easy to see that $u\leq \ln s-\ln \ln s+\frac{1}{2}$. Thus,
\begin{align}\nonumber
\max_mT(m)= G(t_0)\leq lu\leq l(\ln s-\ln \ln s+\frac{1}{2}),
\end{align}
where we recall that $s = \frac{k}{k-h}$.
\end{proof}

\bigskip
\begin{lemma}\label{eq:lower-bound}
Let $f(m)= \frac{m}{k-h+l+1+m}\ln\frac{k+1}{k-h+l+1+m}$. Then 
\begin{align}\nonumber
\max_mf(m)\ge \ln \frac{k+1}{k-h+l+1}-\ln\ln\frac{k+1}{k-h+l+1}-2.
\end{align}
\smallskip
\end{lemma}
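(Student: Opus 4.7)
The plan is to reduce the optimization to a one-variable problem by a convenient substitution, then simply plug in a clever specific value of $m$ rather than pursue the exact maximizer.

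First I would set $a := k-h+l+1$ and $s := \frac{k+1}{a} = \frac{k+1}{k-h+l+1}$, so that $k+1 = sa$ and
\begin{equation*}
f(m) \;=\; \frac{m}{a+m}\,\ln\frac{sa}{a+m}.
\end{equation*}
The substitution $x := a+m$ (with $x\ge a$) turns this into $f = \bigl(1-\tfrac{a}{x}\bigr)\,\ln\tfrac{sa}{x}$, which is much cleaner to analyze. Note that $\max_m f(m) \geq f(m^*)$ for any admissible $m^*$, so I only need to exhibit one good choice.

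The natural guess, motivated by the expected location of the maximum, is $x = a\ln s$, i.e.\ $m^* = a(\ln s - 1)$. Since we are in the regime where $s\ge e$ (the context of Theorem~\ref{thrm2}), $\ln s \ge 1$, so $m^*\ge 0$ is admissible. Plugging in,
\begin{align*}
f(m^*) &= \Bigl(1-\frac{1}{\ln s}\Bigr)\bigl(\ln s - \ln\ln s\bigr) \\
&= \ln s - \ln\ln s - 1 + \frac{\ln\ln s}{\ln s}.
\end{align*}

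To finish, I would show $\frac{\ln\ln s}{\ln s}\ge -1$. For $s\ge e$ this is immediate because $\ln s \ge 1$ and $\ln\ln s \ge 0$, so the ratio is nonnegative. Hence
\begin{equation*}
\max_m f(m) \;\ge\; f(m^*) \;\ge\; \ln s - \ln\ln s - 2,
\end{equation*}
which upon substituting back $s = \frac{k+1}{k-h+l+1}$ yields the claim.

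The only part that requires any thought is guessing the right witness $m^*$; once one sees that the optimizer of $(1-a/x)\ln(sa/x)$ lives near $x \approx a\ln s$ (the stationary condition $e^y(1+y)=s$ at $y=\ln(sa/x)$ gives $y\approx \ln s - \ln\ln s$, so $x\approx a\ln s$), the remaining arithmetic is just substitution and the trivial bound $\ln\ln s \ge 0$. No calculus on $f$ is actually needed; the sub-optimal witness already clears the target by a margin of $1 + \frac{\ln\ln s}{\ln s}$.
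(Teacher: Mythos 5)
Your proposal is correct and takes essentially the same route as the paper: both reduce $f$ to a one-variable function by a substitution and lower-bound the maximum by evaluating at a witness near $m \approx (k-h+l+1)(\ln s-1)$, where $s=\frac{k+1}{k-h+l+1}$. The only real difference is that the paper rounds this witness to an integer $m^*$ with $\frac{k-h+m^*}{l}$ integral (as the adversarial construction in Theorem \ref{thrm2} requires) and spends roughly one unit of the bound absorbing that rounding, whereas your continuous witness lands at $\ln s-\ln\ln s-1$ and then weakens to $-2$; the unit of slack you have left over is exactly what would be needed to accommodate that integrality constraint if you carried it along.
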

\begin{proof}
	Let us define $b:=k+1$, $c:=k-h+l+1$, and $u:=\frac{m}{c+m}$. We have
	$$f(m) = u\ln \frac{b}{c}(1-u):= g(u). $$
Let $m^*$ be the smallest integer $m^*\ge c\ln\frac{b}{c}-c$, such that $\frac{k-h+m^*}{l}$ is integral. Then, we have $c\ln\frac{b}{c}-c\le m^*\le c\ln\frac{b}{c}-c+l$, and consequently, $1-\frac{1}{\ln\frac{b}{c}}\le u^*\le 1-\frac{c}{c\ln\frac{b}{c}+l}$, where $u^*:=\frac{m^*}{c+m^*}$. This means that $u^*\ge 1-\frac{1}{\ln\frac{b}{c}}$ and $1-u^*\ge \frac{c}{c\ln\frac{b}{c}+l}\ge \frac{1}{e}\frac{1}{\ln\frac{b}{c}}$, and we can write,
	\begin{align*}
	g(u^*) &= u^*\ln \frac{b}{c}(1-u^*)\\
	&\ge (1-\frac{1}{\ln\frac{b}{c}})\ln\left(\frac{1}{e}\frac{b}{c\ln\frac{b}{c}}\right)\\
	&=-1+\frac{1}{\ln\frac{b}{c}}+\ln\frac{b}{c\ln\frac{b}{c}}-\frac{\ln\frac{b}{c\ln\frac{b}{c}}}{\ln\frac{b}{c}}\\
	&=\ln\frac{b}{c}-\ln\ln\frac{b}{c}-2+\frac{1}{\ln\frac{b}{c}}+\frac{\ln\ln\frac{b}{c}}{\ln\frac{b}{c}}\\
	&\ge \ln\frac{b}{c}-\ln\ln\frac{b}{c}-2.
	\end{align*}
Therefore, we have, 
	\begin{align*}
	&\max_mf(m)=\max_u g(u)\ge g(u^*)\ge \ln\frac{b}{c}-\ln\ln\frac{b}{c}-2\\
	&\qquad=l\ln\frac{k+1}{k-h+l+1}-l\ln\ln\frac{k+1}{k-h+l+1}-2l+1.
	\end{align*}
\end{proof}

\begin{theorem}\label{thrm6}
	The \emph{farthest in future} (FF) algorithm for the file-bundle caching problem achieves an approximation factor of at most $2l$ compared to the optimal offline algorithm, where $l$ is the queries' length.
\end{theorem}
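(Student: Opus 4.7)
The plan is to partition the sequence $\sigma$ into phases exactly as in the proof of Theorem~\ref{thm:deterministic-LRU}: Phase~$1$ begins at the first query, and for $i\geq 2$, Phase~$i$ begins with the first query that contains the $(k+1)$-th distinct page since Phase~$i-1$ began. For each $i\geq 2$ let $m_i$ denote the number of pages in Phase~$i$ that do not appear in Phase~$i-1$, and write $X_i^{\mathrm{FF}}$ and $X_i^{\mathrm{OPT}}$ for the misses of FF and of OPT inside Phase~$i$. The target inequality is $\mathrm{FF}(\sigma)\leq 2l\cdot\mathrm{OPT}(\sigma)+O(k)$.

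For the lower bound on OPT, FF and OPT operate on caches of the same size $k$ (so $h=k$ in the notation of Section~\ref{sec:h,k}), and the same counting argument used in the proof of Theorem~\ref{thrm1} applies: Phases $i-1$ and $i$ jointly contain $k+m_i$ distinct pages, OPT starts Phase~$i-1$ with only $k$ pages in its cache, and each OPT miss can bring in at most $l$ new pages, so $X_{i-1}^{\mathrm{OPT}}+X_i^{\mathrm{OPT}}\geq m_i/l$ for every $i\geq 2$. Summing this inequality for $i=2,\ldots,K$ (where $K$ is the total number of phases) and observing that each $X_j^{\mathrm{OPT}}$ is counted at most twice yields $\sum_{i\geq 2} m_i\leq 2l\cdot\mathrm{OPT}(\sigma)$.

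For the upper bound on FF, the main claim I plan to establish is that $X_i^{\mathrm{FF}}\leq m_i$ for every $i\geq 2$, which I split into two subclaims. Subclaim~1: inside any single phase, FF never evicts a page that is still going to be requested again in that same phase. At any miss at a query $Q_t$ in Phase~$i$, the number of ``still needed'' Phase-$i$ pages lying outside $Q_t$ is at most $|\mathrm{Phase}\,i|-l\leq k-l$, so $C_t\setminus Q_t$ contains at least $|Q_t\setminus C_t|$ pages whose next request lies outside Phase~$i$, which is exactly enough room for the required evictions; the farthest-in-future rule selects precisely those pages. Subclaim~2: by induction on $i$, the cache of FF at the start of Phase~$i$ contains every Phase-$i$ old page. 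Applying Subclaim~1 to Phase~$i-1$ guarantees that every Phase-$(i-1)$ page stays in the cache at least until its last in-phase request; a further farthest-in-future comparison across the Phase-$i-1$ boundary (a Phase-$i$ old page has its next request inside Phase~$i$, strictly nearer than the next request of any non-Phase-$(i-1,i)$ page) keeps the Phase-$i$ old pages in the cache through the remainder of Phase~$i-1$. Taken together, the two subclaims force every miss in Phase~$i$ to be the first in-phase appearance of a Phase-$i$ new page, which gives $X_i^{\mathrm{FF}}\leq m_i$.

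Combining the two bounds yields
\[
\mathrm{FF}(\sigma)=X_1^{\mathrm{FF}}+\sum_{i\geq 2}X_i^{\mathrm{FF}}\leq k+\sum_{i\geq 2} m_i\leq k+2l\cdot\mathrm{OPT}(\sigma),
\]
which is the desired $2l$-approximation (the additive $k$ coming from the trivial bound $X_1^{\mathrm{FF}}\leq k$ on the first phase). I expect the main obstacle to be the inductive step in Subclaim~2: one has to rule out the scenario in which FF evicts a Phase-$i$ old page during Phase~$i-1$ because the cache is momentarily saturated with pages whose next requests all lie inside Phase~$i-1\cup\mathrm{Phase}\,i$. A careful accounting using $|\mathrm{Phase}\,j|\leq k$, together with a count of how many non-Phase-$(i-1,i)$ pages remain in the cache (those being farther in the future and hence preferred for eviction by FF), should close this gap.
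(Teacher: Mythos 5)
Your overall architecture matches the paper's: the same phase decomposition, the same lower bound $\sum_{i\ge 2} m_i \le 2l\cdot\mathrm{OPT}(\sigma)$ obtained from $X_{i-1}^{\mathrm{OPT}}+X_i^{\mathrm{OPT}}\ge m_i/l$, and the same target $X_i^{\mathrm{FF}}\le m_i$, with your Subclaim~1 proved by exactly the counting the paper uses. The gap is Subclaim~2, and it is not a missing accounting step that a more careful argument will close: the claim is false. Take $k=3$, $l=1$, initial cache $\{a,b,c\}$, and the request sequence $a,d,b,e,c,a$. Phase~$1$ consists of $a,d,b$ and Phase~$2$ of $e,c,a$, so the unique old page of Phase~$2$ is $a$. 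At the request for $d$ the next requests of the cached pages are $b$ at position~$3$, $c$ at position~$5$, and $a$ at position~$6$; FF therefore evicts $a$. The reason your ``farthest-in-future comparison across the boundary'' fails here is that an old page which has already seen its last Phase-$(i-1)$ request has its next request in Phase~$i$, and that can be \emph{farther} than the next request of a page still pending in Phase~$i-1$ or of a new Phase-$i$ page ($c$ in the example). So the cache entering Phase~$2$ is $\{d,b,c\}$ and misses the old page $a$; no refinement of the induction can establish that every old page survives.

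What is true, and what the paper proves, is the weaker invariant that at the start of Phase~$i$ the cache contains at least $k-m_i$ of the $k$ distinct pages of Phase~$i$ --- these need not be the old pages (in the example the guaranteed page is the new page $c$, the absent Phase-$2$ pages are $e$ and $a$, and indeed $m_2=2$). The paper obtains this by a case split on the cache at the end of Phase~$i-1$: either some cached page is requested in neither Phase~$i-1$ nor Phase~$i$, in which case that page persists throughout the phase and is always strictly farther in the future than any page with a pending Phase-$(i-1)$ or Phase-$i$ request, so FF would have evicted it first and hence never evicts any such page; or every cached page lies in $\mathrm{Phase}\,(i-1)\cup\mathrm{Phase}\,i$, a set of $k+m_i$ pages of which only $m_i$ are outside Phase~$i$, so at least $k-m_i$ cached pages belong to Phase~$i$. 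Your final step must then be restated accordingly: a miss in Phase~$i$ need not be the first appearance of a \emph{new} page, but the quantity ``number of pages not currently cached that will still be requested in Phase~$i$'' starts at most $m_i$, never increases (by Subclaim~1), and drops by at least one at every miss, which again yields $X_i^{\mathrm{FF}}\le m_i$ and, combined with your OPT bound, the $2l$ factor.
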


\begin{proof}
	Again we consider the same phase partitioning of the request sequence $\sigma$:
	\begin{itemize}
		\item Phase 1 begins at the first page of $\sigma$;
		\item Phase $i$ begins at the first query which contains the $(k+1)$-th distinct page since phase $i-1$ has begun.
	\end{itemize}
	Consider any two consecutive phases $i$ and $i+1$, with phase $i+1$ including $m$ \emph{new pages}. This means there are also $m$ pages exclusive to phase $i$, and $k-m$ pages shared by both phase $i$ and phase $i+1$. First we want to prove that for thr FF algorithm, at the end of phase $i$ the cache contains at least $k-m$ of the $k$ distinct pages in phase $i+1$. In fact, we can prove that by analyzing two cases:
	\begin{itemize}
		\item[1] At least one page in the cache at the end of phase $i$ is not requested in both phase $i$ and phase $i+1$. In this case, since during phase $i$, that page should be evicted prior to all the pages that are requested in phase $i+1$, this means none of the $k-m$ common pages requested in phase $i$ is evicted. Therefore the cache contains at least $k-m$ of the $k$ distinct pages in phase$i+1$.
		\item[2] All the pages in the cache at the end of phase $i$ are requested in phase $i$ or phase $i+1$. Since there are only $m$ pages exclusive to phase $i$,  therefore the cache contains at least $k-m$ of the $k$ distinct pages in phase $i+1$.
	\end{itemize}
	Now, consider the FF algorithm and let $d(j)$ denote the number of pages that are not contained in the current cache but will be requested later in phase $i+1$, after the $j$-th query in phase $i+1$ has arrived. Since during phase $i+1$, the FF algorithm will not evict any pages that are contained in the current cache and will be requested later in phase $i+1$, it is easy to see that $d(j)$ is non-increasing, and $d(j)$ decreases by at least $1$ whenever a cache miss occurs. Notice that from the above discussion $d(0)\le m$, and at the end of phase $i+1$ we have $d(j)=0$. Thus, the number of cache misses of the FF algorithm during phase $i+1$ is at most $m$.	
	Finally, following the arguments in the proof of Theorem \ref{thrm1}, it is easy to see that the optimal algorithm makes at least $\frac{k-k+m}{l}=\frac{m}{l}$ cache missed during phases $i$ and $i+1$. This indicates that the competitive ratio of the FF algorithm is at most $2l$.
\end{proof}

\end{document}